\newtheorem{theorem}{Theorem}
\newtheorem{lemma}{Lemma}
\newtheorem{assumption}{Assumption}
\newtheorem{definition}{Definition}
\newtheorem{remark}{Remark}
\newtheorem*{problem}{Problem}
\begin{document}
	
\title{On Event-Triggered Resilient Consensus Using Auxiliary Layer}

\author{Pushkal Purohit and Anoop Jain,~\IEEEmembership{Senior Member,~IEEE}\\
\thanks{The authors are with the Department of Electrical Engineering, Indian Institute of Technology Jodhpur, Rajasthan, India 342030 (e-mail: purohit.1@iitj.ac.in, anoopj@iitj.ac.in).}
}

\maketitle


\emph{This work has been submitted to the IEEE for possible publication. Copyright may be transferred without notice, after which this version may no longer be accessible.} \\

\begin{abstract}
Due to its design simplicity, \emph{auxiliary layer-based} resilient control is widely discussed in the literature to mitigate the effects of False Data Injection (FDI) attacks. However, the increased communication burden due to additional communication links for connecting an extra layer is often overlooked in the literature. This paper bridges this gap by considering an event-triggered approach for inter-layer communication between the physical layer (containing actual agents) and the auxiliary layer (containing virtual agents) for the resilient state consensus in a multi-agent system. We provide state-based and dynamic event-triggering mechanisms, the former being the motivation for the latter. The exclusion of Zeno behavior is established by proving positive minimum inter-event time (MIET). Extensive simulation and experimental results are provided to illustrate the proposed methodology. 
\end{abstract}

\begin{IEEEkeywords}
Auxiliary layer, resilient consensus, cyber-attacks, FDI attacks.
\end{IEEEkeywords}


\section{Introduction} \label{sec:Intro}

\subsection{Motivation and Literature Review}
\lettrine{T}{he} advancement in communication technology and power-efficient computing has resulted in the wide use of distributed cyber-physical systems (CPSs) \cite{zhang2022advancements}. Though much designing and research go into primary protection to prevent any cyber attack on CPSs \cite{liu2022secure, zhu2020cross}, it is still susceptible to malicious attacks on communication channels \cite{dibaji2019systems}. Resilient control is used to keep the system running to achieve the desired goal, even in case of primary protection failure. Such an approach can be classified into two categories, viz. switching resilient control \cite{purohit2023passivity, yan2017resilient} and continuous operating resilient control \cite{meng2020adaptive, gusrialdi2018competitive}. This paper follows the latter approach of continuous operating resilient control irrespective of the presence of an attack, like other works in literature \cite{kuo2023resilient, usevitch2020resilient, huo2022observer, mustafa2020resilient}.

Some challenges in detecting and mitigating malicious attacks from the control theory perspective are discussed in \cite{tan2020brief,pirani2023graph}. These works mainly characterize the different types of attacks on a system and various methods to detect those attacks. The attack detection techniques can further be categorized into (i) data-based \cite{renganathan2021higher} and (ii) system-based approach \cite{purohit2023passivity}. The historical data is utilized in (i) to detect any discrepancies in the current system behavior, whereas approach (ii) exploits the knowledge of system dynamics to estimate the discrepancies in the system. However, cyber adversaries have evolved to design stealthy attacks \cite{wu2019optimal, zhao2023event} to remain undetected even in the presence of an attack detector. An adversary with complete knowledge of the system architecture can launch a stealthy attack designed to remain bounded and prevent the system from achieving the intended goal.

In response to such bounded stealthy attacks, it is sensible to have continuous operating resilient control. In this direction, an auxiliary layer-based approach was introduced in \cite{gharesifard2012resilience, gharesifard2013distributed, gusrialdi2014robust}, which provides resilience to cyber attacks on communication channels. This resilient control using an auxiliary network system falls under the category of \emph{network-based} resilient control \cite{pirani2023graph}. One of the crucial factors in designing such an auxiliary layer-based networked approach is the inter-layer interaction gain, which characterizes the competitive nature between the two layers (i.e., the physical layer and the auxiliary layer), and can be appropriately chosen by the designer to have a satisfactory resilience performance \cite{gharesifard2013distributed}. Due to its design simplicity, the concept of employing an extra layer has been further developed in literature \cite{gusrialdi2018competitive, iqbal2022distributed, kumar2022towards, gusrialdi2022cooperative, gusrialdi2023resilient, vaishnav2024auxiliary} while addressing the consensus problem in different application contexts. Specifically, \cite{gusrialdi2018competitive} discusses the resilient system design under undirected communication topology. The approach is extended to directed networks in \cite{iqbal2022distributed}. Both articles \cite{gusrialdi2018competitive} and \cite{iqbal2022distributed} consider homogeneous gain, which is extended to heterogeneous gain in \cite{kumar2022towards}, though only for two agents. Further, the auxiliary layer can be used for attack detection, as discussed in \cite{gusrialdi2022cooperative, vaishnav2024auxiliary}, where the redundant data received through the auxiliary and physical layer is compared to detect the presence of an attack. However, all these works \cite{gusrialdi2018competitive, iqbal2022distributed, kumar2022towards, gusrialdi2022cooperative, gusrialdi2023resilient, vaishnav2024auxiliary} overlook the fact that the presence of additional communication links for an extra layer increases the communication burden of the system. To reduce the communication burden, event-triggered communication has been widely studied in the literature \cite{nowzari2019event}.

\subsection{A Description of Our Approach}
Event-triggered communication can be categorized based on the trigger dependence, viz. (i) state-based, where events are triggered based on state; (ii) time-dependent, where events are triggered based on time; and (iii) dynamic, where events are triggered based on an internal dynamic variable. The problem of event-based secure consensus under FDI attacks has been previously studied in the literature \cite{wang2019resilient, zegers2021event}. Specifically, event-based consensus under the presence of malicious nodes is considered in \cite{wang2019resilient}, and resilience is achieved by filtering out the corrupted information received. A similar approach with the event and self-triggered control has been studied in \cite{zegers2021event} by detecting and isolating the adversarial agents. Unlike \cite{wang2019resilient, zegers2021event}, we do not isolate any agent of the system and consider an attack on the inter-agent communication instead of the presence of a malicious agent. The event-based communication method has also been applied to auxiliary layer-based resilient control in \cite{zhang2021event}; however, the event-based scheme is used only for communication from the physical layer to the auxiliary layer, whereas the continuous communication is used from the auxiliary layer to the physical layer. On the other hand, our work focuses on designing the event-triggered scheme for communication in both directions between the physical and auxiliary layers, which is more practical. As per the authors' knowledge, the dynamic triggering approach was first introduced in \cite{girard2014dynamic}, which is highly suitable for parallel operating event-based observers, as discussed in \cite{petri2021event}. Due to its affinity with the considered problem, we mainly focus on the dynamic triggering mechanism in this paper.

\subsection{Main Features and Contributions}
This paper considers the leader-follower state consensus problem in the presence of unknown FDI attacks affecting the agents and the underlying interaction network. The proposed solution methodology relies on constructing an auxiliary layer networked in conjunction with the physical layer, as shown in Fig.~\ref{fig:system}. The physical layer comprises a leader agent and remaining follower agents exposed to unknown FDI attacks. The auxiliary layer comprises the corresponding virtual agents not exposed to the attacker. We assume that the direct interaction of the leader with the agents in the physical and virtual layers cannot be attacked. However, the attacker may know the leader's state. Moreover, the physical/virtual agents in these layers are described using separate state models. With this background, we consider the event-triggered communication between the two layers where the state information transmitted to either of the layers is governed by an event detector, as shown in Fig.~\ref{fig:system}. The intra-layer communication among agents in both layers happens through the continuous communication channels, shown by solid lines in Fig.~\ref{fig:system}. Firstly, we design the state-based event-triggering condition, shown in Fig.~\ref{fig:layers} for the inter-layer communication between actual agents (with states $x_i, \forall i$) in the physical layer and virtual agents (with states $z_i, \forall i$) in the auxiliary layer where both layers receive and sends updated states at the same time $t_k$. We further extend the previous event design and propose a dynamic event-triggered sequence with separate event conditions for the two layers, using internal dynamical variables, as shown in Fig.~\ref{fig:layers2}. Under dynamic event-based communication, the physical layer sends the updated states ($x_i$) at time sequence $t_{k_x}$, whereas the auxiliary layer sends the updated states ($z_i$) at time sequence $t_{k_z}$. The practical state consensus is proved under both state-based and dynamic event conditions with a positive MIET. In summary, the paper contributes in the following aspects:
\begin{enumerate}[leftmargin=*]
	\item[(a)] Design of event-triggered inter-layer communication between physical and auxiliary layers, with the assurance of positive MIET for exclusion of Zeno behavior. Both state-based and dynamic event-triggering schemes are described under FDI attacks, the former being the motivation for the latter. 
	\item[(b)] Discussion on parameter selection and their effect on the event-triggered communication in auxiliary layer-based resilient control from a hardware implementation perspective. In particular, we highlight a trade-off between the inter-layer gain $\beta > 0$ and the frequency of events required for achieving the consensus in the vicinity of its steady state value under FDI attacks. It is observed that by increasing $\beta$, consensus is achieved in the closed proximity of its steady state value, with an expense of high frequency of events and smaller MIET.
	\item[(c)] Finally, we discuss how the auxiliary layer-based resilient control approach can be implemented on a physical system. In this direction, we conduct experiments on Khepera IV mobile robots by restricting their motion on the respective circular paths, thereby treating them as a single integrator model. With this setup, we aim to achieve the consensus in their angular position using the proposed approach. We also show how the large value of $\beta$ can result in actuator saturation and hence, need to be decided appropriately to fulfill both the physical and the event-triggering communication constraints.   
\end{enumerate}

\begin{figure}[t!]
	\centering{
	\subfigure[State-dependent events ]{\includegraphics[height=3.6cm]{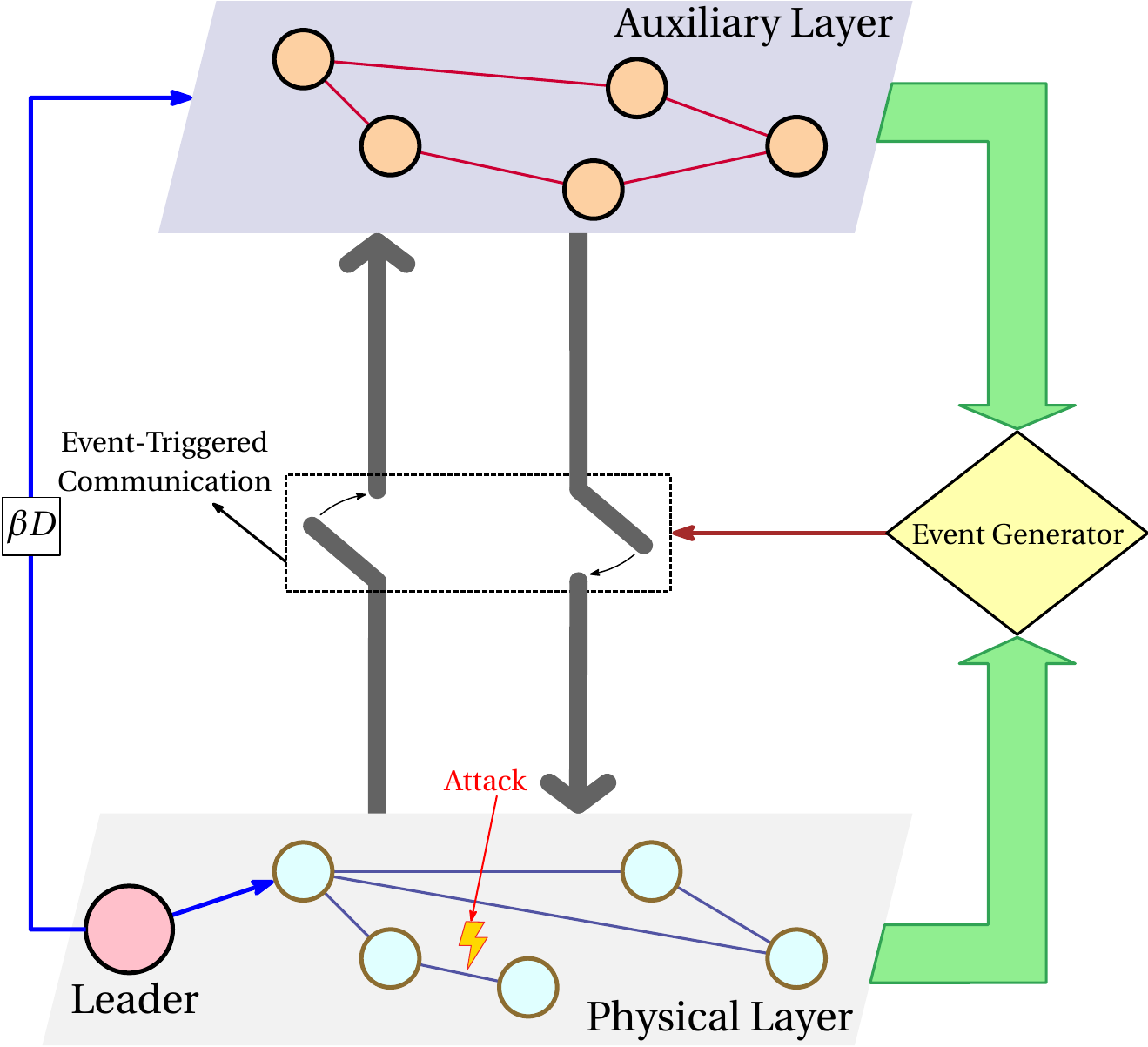}\label{fig:layers}} \hspace*{3pt}
	\subfigure[Layer-wise dynamic events ]{\includegraphics[height=3.6cm]{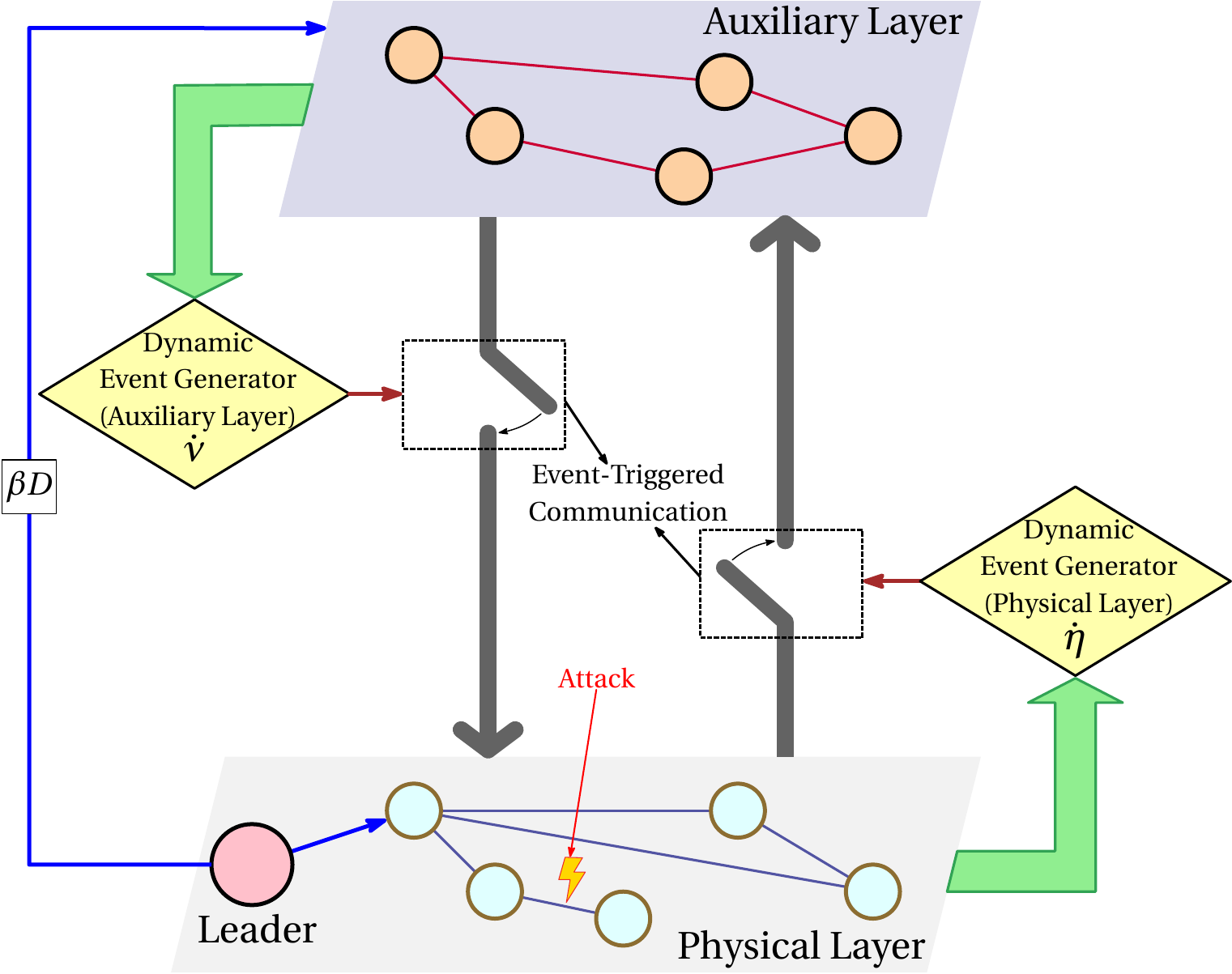}\label{fig:layers2}}
	\caption{Interaction between physical and auxiliary layers: (a) Single; (b) Separate; event generators for both layers.}
	\label{fig:system}}
	\vspace{-13pt}
\end{figure}

\subsection{Preliminaries}
\emph{Notations}: The set of real numbers and non-negative integers are represented by $\mathbb{R}$ and $\mathbb{Z}^+$, respectively. A vector of size $n$ with all entries $0$ (resp. $1$) is represented by $\pmb{0}_n$ (resp., $\pmb{1}_n$). An identity matrix of order $n$ is defined as $\pmb{I}_n$. A diagonal matrix is defined as $M \coloneqq \text{diag}\{m_i\}$ where $m_i \in \mathbb{R}$ denotes the entries along its principal diagonal. For a vector $\bullet \in \mathbb{R}^n$ (resp., a matrix $\star \in \mathbb{R}^{n \times n}$), the Euclidean norm (resp., Frobenius norm) is denoted by $\|\bullet\|$ (resp., $\|\star\|$). The transpose of a vector/matrix is represented using the superscript $\top$. The modulus of a scalar $q \in \mathbb{R}$ is represented by $|q|$. A positive (resp., negative) definite matrix is represented as $M \succcurlyeq 0$ (resp., $M \preccurlyeq 0$). The smallest eigenvalue of a square matrix $M \in \mathbb{R}^{n \times n}$ is given as $\lambda_{\text{min}}(M)$. The symbol $\land$ is used to represent the logical AND operation. The time argument $t$ is dropped at trivial places for clarity.

\emph{Graph Theory}: A graph is given as $\mathcal{G}=(\mathcal{V},\mathcal{E})$, where $\mathcal{V}$ is the set of nodes and $\mathcal{E}$ is the set of edges. The set of neighbors of node $i \in \mathcal{V}$ is denoted by $\mathcal{N}_i \coloneqq \{j \in \mathcal{V} \mid (j,i) \in \mathcal{E}\}$ with $|\mathcal{N}_i|$ being its cardinality. The Laplacian matrix is denoted as $\mathcal{L} = [\ell_{ij}] \in \mathbb{R}^{n \times n}$, where $\ell_{ij} = -1$ if $(j,i) \in \mathcal{E}$ and $\ell_{ij} = 0$ otherwise, for $j \neq i$, and $\ell_{ii} = - \sum_{j=1,j \neq i}^N \ell_{ij}$. In graph $\mathcal{G}$, a leader agent is defined as a set-point for the steady state consensus for the follower agents and is represented using a diagonal matrix $\mathcal{B} \coloneqq \text{diag}\{b_i\} \in \mathbb{R}^{n \times n}$, which corresponds to information flow from leader to followers, where $b_i = 1$ if node $i$ can directly receive information from leader and $b_i = 0$ otherwise. The follower agents are considered interacting according to an undirected graph, i.e., $(j, i) \in \mathcal{E} \iff (i,j) \in \mathcal{E}$. Throughout the paper, we assume that there exists a spanning tree with the leader being the root node, i.e., there is at least one $i$ for which $b_i =1$. Further, we consider that the follower agents within the physical and auxiliary layers interact according to an undirected and connected graph, as shown in Fig.~\ref{fig:system}.

\section{System and Attack Models and Problem Description} \label{sec:Prob and model}

\subsection{Physical Layer and Attack Model} \label{sec:system}
Consider the physical layer comprising a leader with static state $x_0 \in \mathbb{R}$, i.e., $\dot{x}_0 = 0$, and $N$ agents, interacting according to an undirected and connected topology with Laplacian $\mathcal{L}_x$, and governed by the following dynamics:
\begin{equation} \label{sys_singlelayer}
	\dot{x}_i(t) = \sum_{j \in \mathcal{N}_i} \ell_{ij} (x_j(t) - x_i(t)) + b_i (x_0 - x_i(t)),
\end{equation}
for all $i = 1, \ldots, N$, where $x_i \in \mathbb{R}$ is the state of the $i^{\text{th}}$ agent. The system \eqref{sys_singlelayer} can be written in the vector-matrix notation as:
\begin{equation} \label{sys_single_vector}
	\dot{x}(t) = A x(t) + B x_0,
\end{equation}
where $x = [x_1,\ldots,x_N]^\top \in \mathbb{R}^N$ is the state-vector for the group, the system matrix $A = -(\mathcal{L}_x + \mathcal{B}) \in \mathbb{R}^{N \times N}$ is a Hurwitz matrix \cite{vaishnav2024auxiliary} where the diagonal matrix $\mathcal{B} \coloneqq \text{diag}\{b_i\} \in \mathbb{R}^{N \times N}$ is defined in the preliminaries, and vector $B = \mathcal{B}\pmb{1}_N = [b_1,\ldots,b_N]^\top \in \mathbb{R}^N$. Next, we state the following definition of practical state consensus motivated from \cite[Difinition~2]{liu2020team}, before introducing FDI attacks: 

\begin{definition}[Practical Consensus] \label{def_practical_consensus}
	The system \eqref{sys_single_vector} is said to achieve the practical state consensus at the leader state $x_0$ asymptotically, if $\lim_{t \to \infty}|x_i(t) - x_0| \leq \Upsilon, \forall i$, where $\Upsilon > 0$ is an arbitrarily small constant.
\end{definition}

Under ideal conditions (i.e., in the absence of any attack), the system \eqref{sys_single_vector} achieves consensus at the leader state $x_0$ \cite[Theorem~1]{li2009leader}. However, an adversary can exploit the presence of a network to inject an external attack signal (usually known as an FDI attack) into the inter-agent communication channel. Let the injected attack signal from the $j^{\text{th}}$ agent to the $i^{\text{th}}$ agent in the physical layer be $d_{ij} \in \mathbb{R}$. Therefore, the corrupted signal received by the $i^{\text{th}}$ agent from the $j^{\text{th}}$ agent is given by:
\begin{equation} \label{attack}
	{x}_{ij} =\begin{cases}
		x_{j} + d_{ij}, & j \in \mathcal{N}_i\\
		0, & j \notin \mathcal{N}_i
	\end{cases}.
\end{equation}
Note that we consider a generic scenario where an attacker may inject different FDI signals between the two agents, that is, $d_{ji} \neq d_{ij}$, which is possible when the agents use separate channels for incoming and outgoing messages. This can be achieved using software-defined networking \cite{hu2014survey} and belongs to a class of Byzantine attacks \cite{pirani2023graph}. With the attack model \eqref{attack}, the system dynamics \eqref{sys_singlelayer} takes the form: $\dot{x}_i(t) = \sum_{j \in \mathcal{N}_i} \ell_{ij} ((x_j(t) + d_{ij}) - x_i(t)) + b_i (x_0 - x_i(t)) = \sum_{j \in \mathcal{N}_i} \ell_{ij} (x_j(t) - x_i(t)) + b_i (x_0 - x_i(t)) + d_i$, where $d_i \triangleq \sum_{j \in \mathcal{N}_i} \ell_{ij} d_{ij}$ is the joint effect of the multiple attacks (from all incoming signals) on the $i^{\text{th}}$ agent. The preceding expression can be written in the vector-matrix form as:
\begin{equation} \label{system_sld}
	\dot{x}(t) = A x(t) + B x_0 + d(x,t),
\end{equation}
where $d(x,t) \triangleq [d_1(x,t),\cdots,d_N(x,t)]^\top \in \mathbb{R}^N$ is the attack vector. In the subsequent analysis, we consider the following assumption on the attack vector:
\begin{assumption} \label{assumption}
	The FDI attack vector $d(x,t)$ is uniformly bounded for any bounded $x$, that is, $\|d(x,t)\| \leq \bar{D}, \forall t \geq 0$ where $\bar{D} > 0$ is a known constant. In particular, if the attack is governed by the dynamics $\dot{d} = f(d, x)$, this must be finite-gain $\mathcal{L}_2$ stable.
\end{assumption}

\begin{remark} \label{rem_attack}
Note that Assumption~\ref{assumption} is practical in the sense that any unbounded attack signal can be easily identified, in contrast to a bounded attack signal that may remain stealthy and destabilize the system \cite{gusrialdi2018competitive, zhang2021stealthy}. Further, it is important to highlight the fact that the noise and disturbances in a system can also be modeled similarly to the attack model \eqref{attack}; for instance, please refer to \cite[Section~II]{cheng2016on}. However, the main difference is that the noise and disturbances are random environmental signals, while the attack signals are deliberately injected to prevent the system from achieving its goal while remaining undetected; please refer to \cite{pirani2023graph, taheri2020undetectable}. In this paper, we consider no noise/disturbance is present in the system.
\end{remark}

\subsection{Auxiliary Layer and Resultant System} \label{sec:aux_layer_sys}
To obtain resilience against FDI attacks, a parallel operating auxiliary layer of virtual agents can be interconnected to the physical layer, as shown in Fig.~\ref{fig:system}. As per the construction, we consider a corresponding virtual agent in the auxiliary layer for each follower agent in the physical layer. Further, the interaction topology among agents in both layers is not necessarily the same and can be described using the different undirected and connected communication graphs. With this background, the dynamics of the resultant system, under the presence of an FDI attack in \eqref{attack}, is given by \cite{gharesifard2012resilience, gusrialdi2018competitive}:
\begin{subequations} \label{system_init}
	\begin{align}
		\dot{x}(t) & = Ax(t) + \beta Kz(t) + Bx_0 + d(x,t), \label{system_init_x}\\
		\dot{z}(t) & = Hz(t) - \beta Gx(t) + \beta Dx_0, \label{system_init_z}
	\end{align}
\end{subequations}
where $z_i \in \mathbb{R}, \forall i$, $z=[z_1,\ldots,z_N]^\top \in \mathbb{R}^N$ is the state vector of the virtual agents. Further, the matrices $H, K, G \in \mathbb{R}^{N \times N}$ and vector $D \in \mathbb{R}^N$ characterize the interconnection between the two layers and need to be designed appropriately. Specifically, the matrix $H$ governs the intra-layer communication of virtual agents within the auxiliary layer, $K$ and $G$ represent communication from the auxiliary layer to the physical layer and vice versa, and the vectors $B, D \in \mathbb{R}^{N}$ represent the communication from the leader to the followers in physical and auxiliary layers, respectively. It is worth noticing that the case $\beta = 0$ is the same as the absence of the auxiliary layer since \eqref{system_init_x} reduces to \eqref{system_sld}, with \eqref{system_init} being a decoupled system. Below, we describe the selection of these matrices to support the subsequent analysis.

\subsubsection*{Selection of Matrices}\label{subsubsec:matrix_selection} 
As mentioned above, the matrix $A$ is Hurwitz, and $H$ can be chosen as a Hurwitz matrix; hence, there exist symmetric positive definite matrices $P_x$ and $P_z$ such that $A^\top P_x + P_x A \preccurlyeq 0$ and $H^\top P_z + P_z H \preccurlyeq 0$. When the underlying graph network contains a leader agent as the root node, one of the methods for calculating matrix $P_x$ is given as \cite[Theorem~1]{zhang2015constructing}:
\begin{subequations}\label{lyap_matrix}
	\begin{align}
		\zeta & = [\zeta_1,\cdots,\zeta_N]^\top = A^{-1} \pmb{1}_N,\\
		\chi & = [\chi_1,\cdots,\chi_N]^\top = A^{-\top} \pmb{1}_N,\\
		P_x & = \text{diag}\{\chi_i/\zeta_i\}, \ \forall i.
	\end{align}
\end{subequations}
Similarly, $P_z$ can be calculated as above by replacing $A$ with $H$. Further, one can choose $H = -(\mathcal{L}_z + \mathcal{W})$, where $\mathcal{L}_z$ is the Laplacian matrix of the virtual agents in the auxiliary layer, and $\mathcal{W} \triangleq \text{diag}\{w_i\} \in \mathbb{R}^{N \times N}$ with at least one $w_i \neq 0$. If the matrix $H$ is selected as a Laplacian matrix $\mathcal{L}_z$ as in \cite{gusrialdi2014robust}, the auxiliary layer states $z_i$ in \eqref{system_init_z} converge at some arbitrary constant value rather than origin, because of the zero eigenvalue of the Laplacian $\mathcal{L}_z$. It is worth noticing that, although the states $z_i$ of the virtual agents in the auxiliary layer have no practical significance, and hence, the equilibrium where these converge. However, such a choice of $H$ is not useful in designing the event sequence in our formulation and requires $H$ to be chosen as a Hurwitz matrix. We also choose the interconnection matrices $K$ and $G$ as invertible Hurwitz matrices. Such a choice of matrices $A, H, K$, and $G$ ensures that the auxiliary states $z_i$ converge in the neighborhood of the origin and also helps design the event condition. 

Although these matrices can be designed arbitrarily, as per our above discussion, one can further restrict these choices by analyzing the special case of no attack condition for the system \eqref{system_init}. In this direction, the following lemma establishes a relation between the matrices $K$ and $G$ and column vector $D$ such that the agents in the physical layer converge to the leader state $x_0$: 

\begin{lemma}[\hspace{-0.1pt}{\cite[Lemma~1]{gusrialdi2018competitive}}] \label{matrix_cond}
	In the absence of external attack signal $d(x,t)$ (i.e., $d(x,t) \equiv \pmb{0}_N$), the physical system states $x_i$ achieve consensus at $x_0$, under dynamics \eqref{system_init}, i.e., $x_i(t) \to x_0, \forall i$ as $t \to \infty$, if $K$, $G$ and $D$ are chosen as:
	\begin{equation} \label{condition_matrices}
		K^\top P_x = P_z G, \qquad	D = G \pmb{1}_N.
	\end{equation}
\end{lemma}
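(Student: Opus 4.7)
The plan is to transform to error coordinates centered at the desired equilibrium, derive the resulting linear error dynamics, and then use a block-diagonal quadratic Lyapunov function built from $P_x$ and $P_z$, exploiting the compatibility conditions in \eqref{condition_matrices} to cancel the inter-layer cross terms.

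First, I would identify the equilibrium. Since $A = -(\mathcal{L}_x + \mathcal{B})$ and $\mathcal{L}_x \pmb{1}_N = \pmb{0}_N$, we have $A\pmb{1}_N = -B$, so substituting $x = x_0\pmb{1}_N$ and $z = \pmb{0}_N$ into \eqref{system_init} with $d \equiv \pmb{0}_N$ gives $\dot{x} = Ax_0\pmb{1}_N + Bx_0 = -Bx_0 + Bx_0 = 0$ and $\dot{z} = -\beta G x_0 \pmb{1}_N + \beta D x_0 = \beta(D - G\pmb{1}_N) x_0$, which vanishes precisely because of the second condition $D = G\pmb{1}_N$. Next, defining the error variables $\tilde{x} \coloneqq x - x_0\pmb{1}_N$ and $\tilde{z} \coloneqq z$, a direct substitution yields the decoupled-at-the-leader linear system
\begin{equation*}
\dot{\tilde{x}} = A\tilde{x} + \beta K \tilde{z}, \qquad \dot{\tilde{z}} = H\tilde{z} - \beta G \tilde{x}.
\end{equation*}

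Now I would propose the Lyapunov candidate $V(\tilde{x},\tilde{z}) = \tilde{x}^\top P_x \tilde{x} + \tilde{z}^\top P_z \tilde{z}$, which is positive definite since $P_x, P_z \succ 0$. Differentiating along trajectories gives
\begin{equation*}
\dot{V} = \tilde{x}^\top(A^\top P_x + P_x A)\tilde{x} + \tilde{z}^\top(H^\top P_z + P_z H)\tilde{z} + 2\beta \tilde{x}^\top(P_x K - G^\top P_z)\tilde{z}.
\end{equation*}
The crucial step is to invoke the first condition in \eqref{condition_matrices}: taking the transpose of $K^\top P_x = P_z G$ (and using symmetry of $P_x$ and $P_z$) yields $P_x K = G^\top P_z$, so the cross term vanishes identically. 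What remains is $\dot{V} = \tilde{x}^\top(A^\top P_x + P_x A)\tilde{x} + \tilde{z}^\top(H^\top P_z + P_z H)\tilde{z}$, which is negative (semi)definite by the Hurwitz properties of $A$ and $H$; with $P_x, P_z$ obtained from strict Lyapunov equations, one gets $\dot{V} < 0$ for $(\tilde{x},\tilde{z}) \neq (\pmb{0},\pmb{0})$.

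Concluding asymptotic stability of the origin in the error coordinates then gives $\tilde{x}(t) \to \pmb{0}_N$ and hence $x_i(t) \to x_0$ for all $i$, as claimed. The step I expect to be the only non-routine one is recognizing that the skew-symmetric-like arrangement $(+\beta K,\, -\beta G)$ in the inter-layer coupling is exactly what makes the compatibility condition $P_x K = G^\top P_z$ eliminate the cross term irrespective of the value of $\beta > 0$; every other step is a standard Lyapunov calculation.
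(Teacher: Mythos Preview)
Your proof is correct and mirrors exactly the Lyapunov computation the paper itself carries out (in the analysis leading to \eqref{lyap_deri0}), specialized to the no-attack, continuous-communication case; in particular, the paper uses the same candidate $V = \tilde{x}^\top P_x \tilde{x} + z^\top P_z z$ and the same cancellation $P_x K = G^\top P_z$ of the cross term. Note that the paper does not actually prove Lemma~\ref{matrix_cond} in-text but cites \cite{gusrialdi2018competitive}; your argument is precisely the intended one.
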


From the hardware implementation viewpoint, the following remark is useful regarding the selection of the above matrices:
\begin{remark} \label{rem_matrix_selection}
	Let $i_x$ and $i_z$ be the $i^{\text{th}}$ node in physical and auxiliary layers, respectively. Let $\mathcal{N}_{i_x}$ be the set of neighbors of $i_x$ in the physical layer, and $\mathcal{N}_{i_z}$ be the neighbors of $i_z$ in the auxiliary layer. The selection of $K=H$ and $G=A$ is crucial from a practical implementation point of view, which indicates that the nodes $i_x$ and $i_z$ can receive information of agents in $\mathcal{N}_{i_x} \cup \mathcal{N}_{i_z}$. One trivial case to select such matrices while satisfying the condition \eqref{condition_matrices} is $H=K=G=A$, with $P_x$ and $P_z$ being identity matrices.
\end{remark}
 
\subsection{Problem Description} \label{sec:problem}
Consider the corresponding agents in the physical and auxiliary layers share the state information only at event times $t_k$, where $k \in \mathbb{Z}^+$. Let the associated sampled states be denoted by $\bar{x}(t) = x(t_k)$ and $\bar{z}(t) = z(t_k), \ t \in [t_k, t_{k+1}), \forall k$ for the agents in the physical and auxiliary layers, respectively. Under this event-triggered sampling, the system \eqref{system_init} can be written as:
\begin{subequations}\label{system_event_triggered}
	\begin{align}
		\dot{x}(t) & = Ax(t) + \beta K\bar{z}(t) +  Bx_0 + d(x, t) \\
		\dot{z}(t) & = Hz(t) - \beta G\bar{x}(t) + \beta Dx_0,
	\end{align}
\end{subequations}
where $t \in [t_k, t_{k+1}), \forall k$. Let $e_x(t) \triangleq x(t) - \bar{x}$ and $e_z(t) \triangleq z(t) - \bar{z}$ be the sampling errors for $t \in [t_k, t_{k+1}),\forall k$. The system \eqref{system_event_triggered}, in terms of sampling errors $e_x(t)$ and $e_z(t)$, can be expressed as:
\begin{subequations} \label{system}
	\begin{align}
		\dot{x}(t) & = Ax(t) + \beta Kz(t) +  Bx_0 - \beta Ke_z(t) + d(x, t) \label{system_x} \\
		\dot{z}(t) & = Hz(t) - \beta Gx(t) + \beta Dx_0 + \beta G e_x(t), \label{system_z}
	\end{align}
\end{subequations}
where $t \in [t_k, t_{k+1}), \forall k$. Further, \eqref{system} can be written in terms of consensus error $\tilde{x}(t) \triangleq x(t) - x_0 \pmb{1}_N$ (with respect to the leader) as:
\begin{align*}
	\dot{\tilde{x}}(t) & = A \tilde{x}(t) {+} \beta K z(t) {+} B x_0 {-} \beta Ke_z(t) {+} d(x, t) {+} A \pmb{1}_N x_0 \\
	\dot{z}(t) & = H z(t) - \beta G \tilde{x}(t) + \beta Dx_0 + \beta G e_x(t) - \beta G \pmb{1}_N x_0,
\end{align*}
where $t \in [t_k, t_{k+1}), \forall k$. By substituting $A \pmb{1}_N x_0 = -(\mathcal{L}_x + \mathcal{B})\pmb{1}_N x_0 = - B x_0$ from Subsection~\ref{sec:system} and $G \pmb{1}_N = D$ from Lemma~\ref{matrix_cond}, the above equations are simplified as: 
\begin{subequations} \label{err_system}
	\begin{align}
		\dot{\tilde{x}}(t) & = A\tilde{x}(t) + \beta Kz(t) - \beta Ke_z(t) + d(x,t)\\
		\dot{z}(t) & = Hz(t) - \beta G\tilde{x}(t) + \beta G e_x(t),
	\end{align}
\end{subequations}
where $t \in [t_k, t_{k+1}), \forall k$. In the absence of a FDI attack (i.e., $d \equiv \pmb{0}_N$), the equilibrium, denoted by $[\tilde{x}^e, z^e]^\top$, of the system \eqref{err_system} satisfies (since $e_x = e_z = 0$ at the equilibrium): $\pmb{0}_N = A\tilde{x}^e + \beta Kz^e; \pmb{0}_N = Hz^e - \beta G\tilde{x}^e$ with the origin $[\pmb{0}_N,\pmb{0}_N]^\top$ being the only solution for Hurwitz matrices $A,H,K$ and $G$ (see \cite{gusrialdi2018competitive}). We now formally describe the problem addressed in this paper: 
\begin{problem}\label{problem}
Consider the system shown in Fig.~\ref{fig:system}, where the corresponding agents in the physical and auxiliary layers interact according to the event-triggering mechanism and have continuous intra-layer communication among them in the respective layer. Let the agents in the physical layer be subject to (unknown) FDI attacks $d(x, t)$ satisfying Assumption~\ref{assumption}. Design an event-triggering condition for updating sampled values $\bar{x}$ and $\bar{z}$ for the communication between the physical and auxiliary layers in Fig.~\ref{fig:layers} and Fig.~\ref{fig:layers2} such that the equilibrium of the system \eqref{err_system} is stable in the sense of Definition~\ref{def_practical_consensus}. That is, the states $x_i$ of the actual agents in the physical layer achieve practical consensus at the leader state $x_0$, and the states $z_i$ of the virtual agents in the auxiliary layer achieve practical consensus at the origin, i.e., $\lim_{t \to \infty}|x_i(t)-x_0| \leq \vartheta$ and $\lim_{t \to \infty} |z_i(t)| \leq \varphi$, $\forall i$, where $\vartheta, \ \varphi >0$ are arbitrary small constants.	
\end{problem}

Moreover, it will be shown that the consensus bound ($\vartheta$) depends on the parameters of the event condition and the inter-layer gain $\beta$. Specifically, we provide a discussion analyzing the effect of parameter selection on the number of events, MIET, and consensus bound. 

\section{Design of Inter-layer Event Mechanism and Stability Analysis} \label{sec:stability}
This section focuses on designing the event condition for the inter-layer communication between the agents in the physical and auxiliary layers (Fig.~\ref{fig:system}) such that the equilibrium of \eqref{err_system} is asymptotically stable in the sense of Definition~\ref{def_practical_consensus}. Further, the existence of positive MIET is proven under the obtained event conditions, followed by a discussion about the selection of design parameters and their influence on the proposed event condition and MIET. We first discuss the case of state-based event-triggering in Fig.~\ref{fig:layers}, followed by the dynamic event-triggering-based approach as in Fig.~\ref{fig:layers2}.

\subsection{State-Based Event-Triggering} \label{sec:event}
For the system \eqref{err_system} to have the stable equilibrium $[\tilde{x}^e, z^e]^\top$ at the origin, we consider the quadratic candidate Lyapunov function $V(\tilde{x}(t), z(t)) = \tilde{x}^\top(t) P_x \tilde{x}(t) + z^\top(t) P_z z(t)$, where matrices $P_x$ and $P_z$ are defined in Subsection~\ref{subsubsec:matrix_selection}. The time derivative of $V(\tilde{x}(t), z(t))$ is $\dot{V} = 2 \tilde{x}^\top P_x \dot{\tilde{x}} + 2 z^\top P_z \dot{z}$. Substituting for $\dot{\tilde{x}}$ and $\dot{z}$ from \eqref{err_system}, we get
\begin{align*}
	\dot{V} & = 2 \tilde{x}^\top P_x (A\tilde{x} + \beta Kz - \beta Ke_z + d) \\
	& \qquad + 2 z^\top P_z (Hz - \beta G\tilde{x} + \beta G e_x) \\
	& = 2 \tilde{x}^\top P_x A\tilde{x} + 2\beta \tilde{x}^\top P_x K z - 2\beta \tilde{x}^\top P_x Ke_z + 2\tilde{x}^\top P_x d \\
	& \qquad + 2 z^\top P_z Hz - 2 \beta z^\top P_z G \tilde{x} + 2 \beta z^\top P_z G e_x.
\end{align*}
Notice that $2x^\top P_x A x = x^\top (P_x A + A^\top P_x) x$ and $2 z^\top P_z Hz = z^\top (P_z H + H^\top P_z) z$, implying that
\begin{align*}
&	\dot{V}  = \tilde{x}^\top (P_x A {+} A^\top P_x) \tilde{x} + z^\top (P_z H {+} H^\top P_z) z + 2\tilde{x}^\top P_x d \\
	& + 2\beta \tilde{x}^\top P_x K z {-} 2 \beta z^\top P_z G \tilde{x} {-} 2\beta \tilde{x}^\top P_x Ke_z {+} 2 \beta z^\top P_z G e_x.
\end{align*}
Now, using the fact that the scalar $2 \beta z^\top P_z G \tilde{x} = (2 \beta z^\top P_z G \tilde{x})^\top = 2 \beta \tilde{x}^\top G^\top P_z z$, we can further simplify as:
\begin{align} \label{lyap_deri0}
	\dot{V} & = -\tilde{x}^\top Q_x \tilde{x} {-} z^\top Q_z z {+} 2\tilde{x}^\top P_x d {+} 2\beta \tilde{x}^\top (P_x K {-} G^\top P_z) z \nonumber \\
	& \qquad - 2\beta \tilde{x}^\top P_x Ke_z + 2 \beta z^\top P_z G e_x,
\end{align}
where $Q_x \triangleq -(P_x A + A^\top P_x) \succcurlyeq 0$ and $Q_z \triangleq -(P_z H + H^\top P_z) \succcurlyeq 0$ are positive definite matrices as described in Subsection~\ref{subsubsec:matrix_selection}. Substituting $P_x K = G^\top P_z$ from Lemma~\ref{matrix_cond} into \eqref{lyap_deri0} and using Rayleigh inequality \cite[Theorem~4.2.2]{horn2012matrix}, we get
\begin{align*}
	\dot{V} & \leq -\lambda_{\text{min}} (Q_x) \|\tilde{x}\|^2 - \lambda_{\text{min}} (Q_z) \|z\|^2 + 2\|\tilde{x}^\top P_x d\| \\
	& \qquad + 2\beta \|\tilde{x}^\top P_x K e_z\| + 2 \beta \|z^\top P_z G e_x\|.
\end{align*}
Now, using sub-multiplicative inequality $\|\mathcal{X} \mathcal{Y}\| \leq \|\mathcal{X}\| \|\mathcal{Y}\|$ for the square matrices $\mathcal{X}, \mathcal{Y} \in \mathbb{R}^{n \times n}$ \cite[Section~5.2]{meyer2000matrix}, we have
\begin{align} \label{lyap_deri}
	\dot{V} & \leq -\lambda_{\min} (Q_x) \|\tilde{x}\|^2 - \lambda_{\min} (Q_z) \|z\|^2 + 2\|\tilde{x}\| \|P_x\| \|d\| \nonumber \\
	& \qquad + 2\beta \|\tilde{x}\| \|P_x K\| \|e_z\| + 2 \beta \|z\| \|P_z G\| \|e_x\|.
\end{align}
Further, using the inequality $2XY \leq cX^2 + ({1}/{c})Y^2$, for any $X,Y,c>0$, the last three terms in \eqref{lyap_deri} can be upper bounded as: $2\|\tilde{x}\| \|P_x\| \|d\| \leq c_3 \|\tilde{x}\|^2 + ({\|P_x\|^2}/{c_3}) \|d\|^2$, $2 \beta \|z\| \|P_z G\| \|e_x\| \leq c_2 \beta \|z\|^2 + ({\beta}/{c_2}) \|P_z G\|^2 \|e_x\|^2$, and $2\beta \|\tilde{x}\| \|P_x K\| \|e_z\| \leq c_1 \beta \|\tilde{x}\|^2 + ({\beta}/{c_1}) \|P_x K\|^2 \|e_z\|^2$ for some $c_1, c_2, c_3 > 0$. Using these inequalities, we infer from \eqref{lyap_deri} that
\begin{align} \label{lyap_deri1}
	\dot{V} & \leq -\lambda_{\text{min}} (Q_x) \|\tilde{x}\|^2 - \lambda_{\text{min}} (Q_z) \|z\|^2 + c_1 \beta \|\tilde{x}\|^2 + c_3 \|\tilde{x}\|^2 \nonumber \\
	& \qquad + c_2 \beta \|z\|^2 + \frac{\beta}{c_1} \|P_x K\|^2 \|e_z\|^2 + \frac{\beta}{c_2} \|P_z G\|^2 \|e_x\|^2 \nonumber \\
	& \qquad + \frac{\|P_x\|^2}{c_3} \|d\|^2.
\end{align}
Following Assumption~\ref{assumption}, we use the fact that $\|d(x, t)\|^2 \leq \bar{D}^2$, to yield:
\begin{align} \label{lyap_deriF}
	\dot{V} & \leq -\left(\lambda_{\text{min}} (Q_x) {-} c_3 {-} c_1 \beta \right) \|\tilde{x}\|^2 {-} \left(\lambda_{\text{min}} (Q_z) {-} c_2 \beta\right) \|z\|^2 \nonumber\\
	& \quad {+} \frac{\beta}{c_1} \|P_x K\|^2 \|e_z\|^2 {+} \frac{\beta}{c_2} \|P_z G\|^2 \|e_x\|^2 {+} \frac{\|P_x\|^2}{c_3} \bar{D}^2,
\end{align}
where $(c_3 + c_1 \beta) \in (0,\lambda_{\min} (Q_x))$ and $c_2 \beta \in (0,\lambda_{\min}(Q_z))$ are design parameters. From \eqref{lyap_deriF}, it readily follows that the equilibrium of \eqref{err_system} is Lyapunov stable if the following condition holds:
\begin{align} \label{event_cond_single_layer}
	& \frac{\beta}{c_1} \|P_x K\|^2 \|e_z\|^2 + \frac{\beta}{c_2} \|P_z G\|^2 \|e_x\|^2 \leq \nonumber \\
	& \qquad \kappa \biggl( (\lambda_{\text{min}} (Q_x) - c_3 - c_1 \beta) \|\tilde{x}\|^2 \nonumber \\
	& \qquad + (\lambda_{\text{min}} (Q_z) - c_2 \beta) \|z\|^2 + \epsilon - \frac{\|P_x\|^2}{c_3} \bar{D}^2 \biggr),
\end{align}
where $\kappa \in (0,1)$ is a design variable, and $\epsilon > ({\|P_x\|^2}/{c_3}) \bar{D}^2$ is a constant with $P_x$ being the solution to Lyapunov equation (see \eqref{lyap_matrix}), and is used to avoid the Zeno behavior. Based on the condition \eqref{event_cond_single_layer}, the event-triggering sequence can be defined as:
\begin{equation} \label{event2}
\hspace*{-11pt}	\begin{aligned}
		t_0 & = 0, \\
		t_{k+1} & {=} \inf \left\{\begin{array}{l}
			t \in \mathbb{R} \mid t>t_k \ \land \\ \frac{\beta}{c_1} \|P_x K\|^2 \|e_z\|^2 {+} \frac{\beta}{c_2} \|P_z G\|^2 \|e_x\|^2 \geq \\
			\kappa \biggl( (\lambda_{\min} (Q_x) {-} c_3 {-} c_1 \beta) \|\tilde{x}\|^2\\
			{+} (\lambda_{\min} (Q_z) {-} c_2 \beta) \|z\|^2 {+} \epsilon {-} \frac{\|P_x\|^2}{c_3} \bar{D}^2 \biggr)		
		\end{array}\right\}.
	\end{aligned}
\end{equation}
Further, the presence of a positive MIET under \eqref{event2} is proven in the Appendix. It is worth noting that the implementation of \eqref{event2} requires continuous monitoring of the signals $e_x(t)$ and $\tilde{x}(t)$ from the physical layer and $e_z(t)$ and $z(t)$ from the auxiliary layer to generate an event. This requires all these signals to be continuously available at one processing point for comparison and event generation, as shown in Fig.~\ref{fig:layers} via a single event generator. However, implementing such an event condition \eqref{event2} defeats the purpose of the event-triggered approach \cite{petri2021event}. Indeed, it is more practical that the physical layer signals ($e_x(t)$ and $\tilde{x}(t)$) and auxiliary layer signals ($e_z(t)$ and $z(t)$) be monitored separately to generate an event for communication. In this direction, we utilize the internal dynamics-based event-triggering approach \cite{girard2014dynamic} to design two separate event conditions for physical and auxiliary layers, as shown in Fig.~\ref{fig:layers2}, which is the topic of the following subsection.

\subsection{Internal Dynamics-Based Event-Triggering} \label{sec:event_2}
Let the events for the corresponding agents' interaction in the physical and auxiliary layers be generated at time instants $t_{k_x}$ and $t_{k_z}$, respectively, where $k_x, k_z \in \mathbb{Z}^+$. With these updated event times, let the sampled values be defined as $\bar{x}(t) = x(t_{k_x})$ for $t \in [t_{k_x}, t_{k_x + 1}), \forall t_{k_x}$ and $\bar{z}(t) = z(t_{k_z})$ for $t \in [t_{k_z}, t_{k_z + 1}), \forall t_{k_z}$. And the corresponding sampling errors be defined as $e_x(t) = x(t) - \bar{x}$ for $t \in [t_{k_x}, t_{k_x + 1}), \forall t_{k_x}$ and $e_z(t) = z(t) - \bar{z}$ for $t \in [t_{k_z}, t_{k_z + 1}), \forall t_{k_z}$. Next, we propose the following internal dynamical variables $\eta$ and $\nu$, dependent on the sampling errors $e_x$ and $e_z$, satisfying the following differential equations:
\begin{subequations} \label{aux_dynamics}
	\begin{align}
		\dot{\eta}(t) & = - \eta(t) + \sigma_1 \|e_x(t)\|^2 \label{aux_dynamics_x}, \qquad \eta(0) = \eta_0,\\
		\dot{\nu}(t) & = - \nu(t) + \sigma_2 \|e_z(t)\|^2 \label{aux_dynamics_z}, \qquad \nu(0) = \nu_0,
	\end{align}
\end{subequations}
for all $t \geq 0$, where $\sigma_1, \sigma_2$ are positive design constants. From \cite[Lemma~2.2]{girard2014dynamic}, it follows that the solution of \eqref{aux_dynamics} satisfies $\eta(t) \geq 0$ and $\nu(t) \geq 0, \ \forall t \geq 0$ for the initial values $\eta_0 \geq 0$ and $\nu_0 \geq 0$. Based on the dynamics \eqref{aux_dynamics}, consider that the agents in physical layer sends their states $x_i, \forall i$ at event sequence defined as:
\begin{subequations} \label{event_separate}
	\begin{equation} \label{event_cond_physical}
		\begin{aligned}
			t_{0_x} & = 0, \\
			t_{k_x+1} & = \inf \left\{\begin{array}{l}
				t \in \mathbb{R} \mid t>t_{k_x} \ \land \\ \|e_x\|^2 \geq \dfrac{c_2 (\eta + \Omega)}{\beta \|P_z G\|^2 + \sigma_1 c_2}
			\end{array}\right\},
		\end{aligned}
	\end{equation}
	and the virtual agents in auxiliary layer sends their states $z_i, \forall i$ at event sequence defined as:
	\begin{equation} \label{event_cond_auxiliary}
		\begin{aligned}
			t_{0_z} & = 0, \\
			t_{k_z+1} & = \inf \left\{\begin{array}{l}
				t \in \mathbb{R} \mid t>t_{k_z} \ \land \\ \|e_z\|^2 \geq \dfrac{c_1 (\nu + \mu)}{\beta \|P_x K\|^2 + \sigma_2 c_1}
			\end{array}\right\},
		\end{aligned}
	\end{equation}
\end{subequations}
where $\Omega = \varepsilon - ({\|P_x\|^2}/{c_3})\bar{D}^2>0$, $\varepsilon > ({\|P_z G\|^2}/{c_3})\bar{D}^2>0$, and $\mu > 0$ are constants, and $c_1, c_2, c_3$ are as defined in \eqref{lyap_deriF}. The below remark briefly discusses the selection of the design variables $\Omega$ and $\mu$, before we prove the practical state consensus of the system \eqref{err_system} under the event condition \eqref{event_separate}.

\begin{remark}
Note that the constant $\Omega$ in physical layer event condition \eqref{event_cond_physical} is used to eliminate the positive attack term $({\|P_z G\|^2}/{c_3})\bar{D}^2$ in \eqref{colyap_deri}, while guaranteeing the presence of positive MIET, as discussed later in the paper. However, the same can be assured by putting the term with auxiliary layer event condition by interchanging the constant terms $\Omega$ and $\mu$ in \eqref{event_separate}, i.e., selecting $\Omega>0$ and $\mu = \varepsilon - ({\|P_z G\|^2}/{c_3})\bar{D}^2$. The choice depends on whether the physical or auxiliary layer should handle (relatively) more events, as compared to the other. Note that $P_z$ is given analogously to $P_x$ in \eqref{lyap_matrix}. 
\end{remark}

We now state the following theorem proving the practical consensus of the system \eqref{err_system} under the proposed event-triggering condition \eqref{event_separate}:

\begin{theorem} \label{thm_stability_event}
Under the separate layer-based event conditions \eqref{event_separate} where the dynamics of internal variables $\eta$ and $\nu$ is governed by \eqref{aux_dynamics} with $\eta_0, \nu_0 \geq 0$, the equilibrium $[\tilde{x}^e, z^e]^\top$ of the system \eqref{err_system}, as shown in Fig.~\ref{fig:layers2}, is asymptotically stable in the sense of Definition~\ref{def_practical_consensus}, that is, $|x_i(t) - x_0| \leq \vartheta$ and $|z_i(t)| \leq \varphi$, $\forall i, j$ as $t \to \infty$ for some arbitrary small constants $\vartheta, \varphi > 0$.
\end{theorem}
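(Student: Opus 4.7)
The plan is to augment the Lyapunov function $V(\tilde{x}, z) = \tilde{x}^\top P_x \tilde{x} + z^\top P_z z$ from the state-based analysis with the two internal variables and work with $W(\tilde{x}, z, \eta, \nu) = V(\tilde{x}, z) + \eta + \nu$. This is the standard candidate for dynamic event-triggered designs \cite{girard2014dynamic}, and it lets me reuse the bound on $\dot V$ that was already derived in \eqref{lyap_deriF}. Because $\eta_0, \nu_0 \geq 0$, the cited Lemma~2.2 of \cite{girard2014dynamic} ensures $\eta(t), \nu(t) \geq 0$ for all $t \geq 0$, so $W$ is a positive definite function of $(\tilde{x}, z)$.

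Differentiating $W$ along the closed-loop trajectory and combining \eqref{lyap_deriF} with \eqref{aux_dynamics}, I will group the sampling-error contributions as
\begin{align*}
\dot W & \leq -\alpha_x \|\tilde{x}\|^2 - \alpha_z \|z\|^2 + \tfrac{\|P_x\|^2}{c_3}\bar{D}^2 - \eta - \nu \\
& \quad + \tfrac{\beta \|P_z G\|^2 + \sigma_1 c_2}{c_2}\|e_x\|^2 + \tfrac{\beta \|P_x K\|^2 + \sigma_2 c_1}{c_1}\|e_z\|^2,
\end{align*}
where $\alpha_x = \lambda_{\min}(Q_x) - c_3 - c_1 \beta > 0$ and $\alpha_z = \lambda_{\min}(Q_z) - c_2 \beta > 0$ by the parameter ranges listed just below \eqref{lyap_deriF}. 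The event rules in \eqref{event_separate} are constructed precisely so that, on any interval before the next event fires, each scaled error term is dominated by the corresponding internal variable, i.e.,
\[
\tfrac{\beta \|P_z G\|^2 + \sigma_1 c_2}{c_2}\|e_x\|^2 \leq \eta + \Omega, \qquad \tfrac{\beta \|P_x K\|^2 + \sigma_2 c_1}{c_1}\|e_z\|^2 \leq \nu + \mu.
\]
Substituting these cancels the $-\eta$ and $-\nu$ contributions, and the choice $\Omega = \varepsilon - (\|P_x\|^2/c_3)\bar{D}^2$ absorbs the residual attack term, yielding
\[
\dot W \leq -\alpha_x \|\tilde{x}\|^2 - \alpha_z \|z\|^2 + \varepsilon + \mu.
\]

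From this dissipation estimate a standard ultimate-boundedness argument gives $\dot W < 0$ outside the compact set $\{\alpha_x\|\tilde{x}\|^2 + \alpha_z\|z\|^2 \leq \varepsilon + \mu\}$, so all trajectories enter and remain in it. Using $W \geq \lambda_{\min}(P_x)\|\tilde{x}\|^2 + \lambda_{\min}(P_z)\|z\|^2$ together with $|x_i - x_0| \leq \|\tilde{x}\|$ and $|z_i| \leq \|z\|$ then delivers the component-wise bounds $\vartheta$ and $\varphi$, and both can be made arbitrarily small by shrinking $\varepsilon$ and $\mu$. The main bookkeeping hazard is the behaviour at event times: $W$ itself stays continuous because an event only resets $\bar{x}$ or $\bar{z}$ and drops $\|e_x\|^2$ or $\|e_z\|^2$ to zero, so the above $\dot W$-bound holds on each open inter-event interval and integrates to a valid global bound on $W$; I will make this piecewise integration explicit to rule out any subtlety at the discrete update instants and to keep the argument compatible with the positive MIET proved in the Appendix (which is what ultimately prevents accumulation of event times from invalidating the bound).
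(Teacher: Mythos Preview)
Your proposal is correct and follows essentially the same route as the paper: the paper also augments $V$ to $U=V+\eta+\nu$, combines \eqref{lyap_deriF} with \eqref{aux_dynamics} to obtain exactly your grouped inequality, applies the event rules \eqref{event_separate} to cancel $-\eta,-\nu$, uses $\Omega=\varepsilon-(\|P_x\|^2/c_3)\bar D^2$ to absorb the attack term, and concludes ultimate boundedness via the same $\dot U\le -\alpha_x\|\tilde x\|^2-\alpha_z\|z\|^2+\varepsilon+\mu$. Your added remarks on continuity of $W$ across event instants and the piecewise integration are a welcome bit of extra care that the paper leaves implicit.
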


\begin{proof}
	Consider the joint candidate Lyapunov function $U(\tilde{x}(t),z(t),\eta(t),\nu(t)) = V(\tilde{x}(t),z(t)) + \eta(t) + \nu(t) \geq 0$, where $V$ is defined in Section~\ref{sec:event}, and $\eta$ and $\nu$ are governed by the dynamics \eqref{aux_dynamics}. The time derivative of $U$ is given by $\dot{U} = \dot{V} + \dot{\eta} + \dot{\nu}$ where substituting for $\dot{V}$, $\dot{\eta}$ and $\dot{\nu}$ from \eqref{lyap_deriF} and \eqref{aux_dynamics}, respectively, and combining terms, we get
	\begin{align} \label{colyap_deri}
		\dot{U}	& {\leq} {-} \left(\lambda_{\text{min}} (Q_x) {-} c_3 {-} c_1 \beta \right) \|\tilde{x}\|^2 {+} \left( \frac{\beta\|P_z G\|^2 {+} \sigma_1 c_2}{c_2} \right) \|e_x\|^2 \nonumber \\
		& \quad {-} \left(\lambda_{\text{min}} (Q_z) {-} c_2 \beta\right) \|z\|^2 {+} \left( \frac{\beta\|P_x K\|^2 {+} \sigma_2 c_1}{c_1} \right) \|e_z\|^2 \nonumber \\
		& \quad - \eta - \nu + \frac{\|P_x\|^2}{c_3} \bar{D}^2.
	\end{align}
	From \eqref{event_separate}, we have $\|e_x(t)\|^2 \leq (c_2 (\eta + \Omega))/(\beta \|P_z G\|^2 + \sigma_1 c_2)$ and $\|e_z(t)\|^2 \leq (c_1 (\nu + \mu))/(\beta \|P_x K\|^2 + \sigma_2 c_1)$ for all $t \geq 0$. Substituting these into \eqref{colyap_deri}, we obtain $\dot{U}	\leq -\left(\lambda_{\text{min}} (Q_x) - c_3 - c_1 \beta \right) \|\tilde{x}\|^2 - (\lambda_{\min} (Q_z) {-} c_2 \beta) \|z\|^2 + \Omega + \mu + ({\|P_x\|^2}/{c_3}) \bar{D}$. Further, replacing $\Omega = \varepsilon - ({\|P_x\|^2}/{c_3})\bar{D}^2$ as discussed in \eqref{event_separate}, we have
	\begin{align}\label{stability}
		\dot{U} & \leq -\left(\lambda_{\text{min}} (Q_x) {-} c_3 {-} c_1 \beta \right) \|\tilde{x}\|^2 {-} \left(\lambda_{\text{min}} (Q_z) {-} c_2 \beta\right) \|z\|^2 \nonumber \\
		& \quad + \varepsilon - \frac{\|P_x\|^2}{c_3} \bar{D} + \mu + \frac{\|P_x\|^2}{c_3} \bar{D}, \nonumber \\
		& = -\left(\lambda_{\text{min}} (Q_x) {-} c_3 {-} c_1 \beta \right) \|\tilde{x}\|^2 {-} \left(\lambda_{\text{min}} (Q_z) {-} c_2 \beta\right) \|z\|^2 \nonumber \\
		& \quad + \varepsilon + \mu. 
	\end{align}
	From \eqref{stability}, it can be observed that $\dot{U} < 0$, for $(\lambda_{\text{min}} (Q_x) - c_3 - c_1 \beta) \|\tilde{x}\|^2 + (\lambda_{\text{min}} (Q_z) - c_2 \beta) \|z\|^2 > \varepsilon + \mu$. Alternatively, this means that as $t \to \infty$ the states $\tilde{x}$ and $z$ converge to an elliptical region defined by $\|\tilde{x}\|^2/g_1^2 + \|z\|^2/g_2^2 < 1$, where $g_1 = \sqrt{(\varepsilon + \mu)/(\lambda_{\min} (Q_x) {-} c_3 {-} c_1 \beta)}$  and $g_2 = \sqrt{(\varepsilon + \mu)/(\lambda_{\min} (Q_z) {-} c_2 \beta)}$. Note that $g_1$ and $g_2$ are well-defined since $\varepsilon > 0$, $\mu > 0$ from \eqref{event_separate}, and $\lambda_{\min} (Q_x) {-} c_3 {-} c_1 \beta > 0$, $\lambda_{\min} (Q_z) {-} c_2 \beta > 0$ from \eqref{lyap_deriF}. As a result, it can be concluded that $\|\tilde{x}(t)\| < g_1$ and $\|z(t)\| < g_2$ as $t \to \infty$. Further, since $\dot{U} < 0$, we have $U$ decreasing over time and it follows that $\lim_{t \to \infty} U \in \{\hbar \in \mathbb{R} \mid 0 \leq \hbar \leq \varPhi\}$ for some small positive constant $\varPhi$ \cite[Theorem~4.18]{khalil2002nonlinear}. From the definition of $U$ and selection of $\eta_0 \geq 0$ and $\nu_0 \geq 0$ in \eqref{aux_dynamics}, we have that $V(x(t)) \leq U(x(t),\eta(t), \nu(t))$ for all $t \geq 0$. In other words, $V$ is upper bounded by $U$ for all $t \geq 0$, and hence, $V$ also converges to the neighborhood of origin, i.e., $\lim_{t \to \infty} V \in \{\hbar \in \mathbb{R} \mid 0 \leq \hbar \leq \varPsi\}$, where $\varPsi \leq \varPhi$ is a small positive constant. Consequently, it follows that there exist small constants $\vartheta, \varphi > 0$ such that $|x_i(t) - x_0| \leq \vartheta$ and $|z_i(t)| \leq \varphi$, $\forall i, j$ as $t \to \infty$. This completes the proof.
\end{proof}

To show that there exists a positive MIET for both physical and auxiliary layers' events, it is considered that the state derivatives (i.e., $\dot{x}$ and $\dot{z}$) are bounded, i.e., $\| \dot{x} \| = \|A \tilde{x} + \beta K z + \beta Ke_z + d\| \leq M_x$ and $\| \dot{z} \| =  \|H z - \beta G \tilde{x} + \beta G e_x\| \leq M_z$, where $M_x$ and $M_z$ are arbitrarily large positive constants. Consequently, the sampling errors $e_x = x - \bar{x}$, $e_z = z - \bar{z}$ have bounded derivatives i.e., $\|\dot{e}_x\| = \| \dot{x} \| \leq M_x$ and $\| \dot{e}_z \| = \| \dot{z} \| \leq M_z$. Note that this assumption is intrinsic in all stable systems and arises due to hardware limitations (for example a speed limit of a robot) and has been common in the literature \cite{petri2021event,zegers2021event}, since unbounded dynamics ($\dot{x}$ and $\dot{z}$) would mean the system is unstable, which also means that a strictly positive MIET does not exist. The following theorem shows the presence of a positive MIET under the above assumption.

\begin{theorem}[Existence of MIET] \label{thm_MIET}
Under the conditions given in Theorem~\ref{thm_stability_event}, the event condition \eqref{event_separate} does not exhibit Zeno behavior and there exists a positive MIET for both physical and auxiliary layers, as given below:
	\begin{enumerate}[leftmargin=*]
		\item[(a)] MIET for physical layer event condition \eqref{event_cond_physical} is
		\begin{equation} \label{tau_x}
			\tau_{x} \geq \frac{1}{2M_x} \sqrt{\frac{c_2 \Omega}{\beta \|P_z G\|^2 + \sigma_1 c_2}} > 0.
		\end{equation}
		\item[(b)] MIET for auxiliary layer event condition \eqref{event_cond_auxiliary} is
		\begin{equation} \label{tau_z}
			\tau_z \geq \frac{1}{2M_z} \sqrt{\frac{c_1 \mu}{\beta \|P_x K\|^2 + \sigma_2 c_1}} > 0.
		\end{equation}
	\end{enumerate}
\end{theorem}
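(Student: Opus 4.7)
The plan is to exploit three structural facts: each sampling error resets to zero at its triggering instant; its derivative is uniformly bounded by the posited bound on the corresponding state derivative; and the triggering threshold is bounded below by a strictly positive constant that does not depend on the trajectory. Combining a linear growth estimate on the error with the positive floor on the threshold will force a strictly positive waiting time between consecutive events in each layer.

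For the physical layer, at $t = t_{k_x}$ the update $\bar{x}(t_{k_x}^+) = x(t_{k_x})$ gives $e_x(t_{k_x}^+) = 0$, and on $(t_{k_x}, t_{k_x+1})$ the held-constant $\bar{x}$ yields $\dot{e}_x = \dot{x}$, so the posited bound gives $\|\dot{e}_x(t)\| \leq M_x$ and hence $\|e_x(t)\| \leq M_x(t - t_{k_x})$. To lower-bound the threshold in (\ref{event_cond_physical}), I would invoke \cite[Lemma~2.2]{girard2014dynamic} applied to (\ref{aux_dynamics_x}) with $\eta_0 \geq 0$ to conclude $\eta(t) \geq 0$ for all $t \geq 0$, and then use the design constraint $\Omega > 0$ stated just after (\ref{event_separate}) to obtain
\[
\frac{c_2(\eta(t) + \Omega)}{\beta \|P_z G\|^2 + \sigma_1 c_2} \;\geq\; \frac{c_2 \Omega}{\beta \|P_z G\|^2 + \sigma_1 c_2} \;>\; 0.
\]
Since an event at $t_{k_x+1}$ can fire only once $\|e_x\|^2$ reaches this threshold, combining the two estimates yields $M_x^2 \tau_x^2 \geq c_2 \Omega / (\beta \|P_z G\|^2 + \sigma_1 c_2)$, which after square-rooting delivers a strictly positive lower bound on $\tau_x = t_{k_x+1} - t_{k_x}$ of the form displayed in (\ref{tau_x}); the precise constant prefactor in $1/M_x$ is immaterial for Zeno exclusion, and any conservative factor from a more refined growth estimate can be absorbed on the right-hand side. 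The auxiliary-layer bound (\ref{tau_z}) follows by exactly the same template after substituting $e_z, M_z, \nu, \mu, \sigma_2, c_1$, and $\|P_x K\|$ in the obvious places, and simultaneous positivity of $\tau_x$ and $\tau_z$ rules out Zeno behavior in both layers.

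The main obstacle is securing the strict positivity of the threshold, which rests on two earlier design choices that must be tracked carefully: the stipulation $\Omega, \mu > 0$ embedded in the event rule (\ref{event_separate}), and the non-negativity $\eta, \nu \geq 0$ inherited from the auxiliary dynamics (\ref{aux_dynamics}) via the cited result of Girard. If either were dropped, the threshold could collapse to zero and permit Zeno accumulation. Once that strictly positive floor is in hand, what remains is essentially a one-line linear-growth argument per layer.
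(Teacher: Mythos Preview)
Your proposal is correct and follows essentially the same route as the paper: reset $e_x(t_{k_x})=0$, invoke the bound $\|\dot e_x\|\le M_x$, and lower-bound the triggering threshold via $\eta\ge 0$ and $\Omega>0$ (and symmetrically for the auxiliary layer). The only difference is cosmetic---you integrate $\|\dot e_x\|$ directly to obtain $\|e_x(t)\|\le M_x(t-t_{k_x})$, which in fact yields the slightly sharper constant $\tau_x\ge \tfrac{1}{M_x}\sqrt{c_2\Omega/(\beta\|P_zG\|^2+\sigma_1 c_2)}$, whereas the paper bounds $\tfrac{d}{dt}\|e_x\|^2\le 2\|e_x\|M_x$, substitutes the terminal error value, and integrates, picking up the extra factor of $2$ appearing in \eqref{tau_x}.
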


\begin{proof}
For brevity, we provide the proof only for part (a), and the proof of part (b) follows accordingly. For the MIET ($\tau_x$) in the physical layer, the event triggers at time $t_{k_x} + \tau_x$ when the right side of \eqref{event_cond_physical} is minimum, i.e.,
	\begin{equation*}
		\|e_x(t_{k_x} + \tau_x)\|^2 = \min \left\{\frac{c_2 (\eta + \Omega)}{\beta \|P_z G\|^2 + \sigma_1 c_2}\right\},
	\end{equation*}
	where $\eta$ is the only variable. Since $\eta_0 \geq 0$, we have from \eqref{aux_dynamics_x} that $\min_{t} \eta(t) = 0$, and hence,
	\begin{equation} \label{error_MIET_physical}
		\|e_x(t_{k_x} + \tau_x)\|^2 = \frac{c_2 \Omega}{\beta \|P_z G\|^2 + \sigma_1 c_2}.
	\end{equation}
	Further, it can be obtained that
	\begin{equation} \label{error_dynamics_physical}
		\frac{d}{dt} \|e_x(t)\|^2 {=} \frac{d}{dt} e_x^\top e_x = 2 e_x^\top \dot{e_x} \leq 2 \|e_x(t)\| M_x.
	\end{equation}
	For the event-triggering at MIET, substitute for $\|e_x\|$ from \eqref{error_MIET_physical} into \eqref{error_dynamics_physical}, we get
	\begin{equation*}
		\frac{d}{dt} \|e_x(t)\|^2 \leq 2M_x \sqrt{\frac{c_2 \Omega}{\beta \|P_z G\|^2 + \sigma_1 c_2}}.
	\end{equation*}
	Integrating the above for $t \in [t_{k_x}, t_{k_x} + \tau_x)$, we get $\|e(t_{k_x})\|^2 + \|e_x(t_{k_x} + \tau_x)\|^2 \leq 2M_x \tau_x \sqrt{\frac{c_2 \Omega}{\beta \|P_z G\|^2 + \sigma_1 c_2}}$. From definition of sampling error $\|e(t_{k_x})\|^2 = 0$, we have $\|e_x(t_{k_x} + \tau_x)\|^2 \leq 2M_x \tau_x \sqrt{\frac{c_2 \Omega}{\beta \|P_z G\|^2 + \sigma_1 c_2}}$. Rearranging and substituting from \eqref{error_MIET_physical}, we get \eqref{tau_x}. Following the steps similar to part~(a), we get MIET for auxiliary layer $\tau_z$ as in \eqref{tau_z}. This concludes the proof.
\end{proof}

From \eqref{tau_x} and \eqref{tau_z}, it can be observed that the large values of $\beta$ result in smaller MIET ($\tau_x$ and $\tau_z$), thus requiring a relatively large number of events to achieve consensus. A further discussion on the selection of $\beta$ is provided in the upcoming section.

\begin{figure}[t!]
	\centering{
		\includegraphics[width=0.7\linewidth]{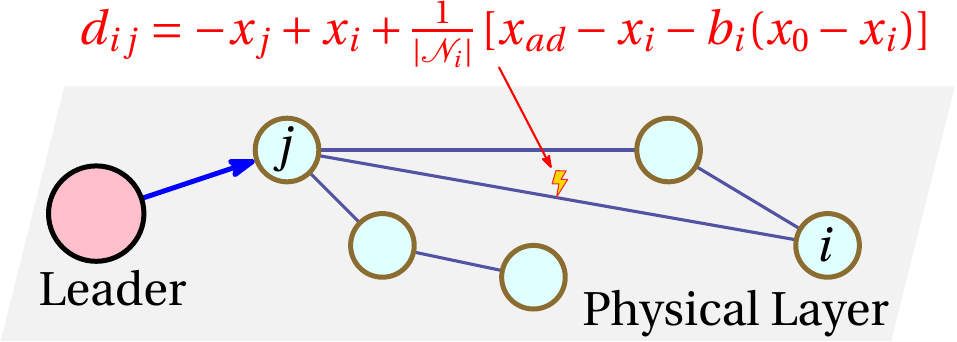}
		\caption{FDI attack on the link connecting $i^\text{th}$ and $j^\text{th}$ agents.}
		\label{fig:attack_example_illust}}
	\vspace{-13pt}
\end{figure}

\begin{figure}[t!]
	\centering
	\includegraphics[width=0.7\linewidth]{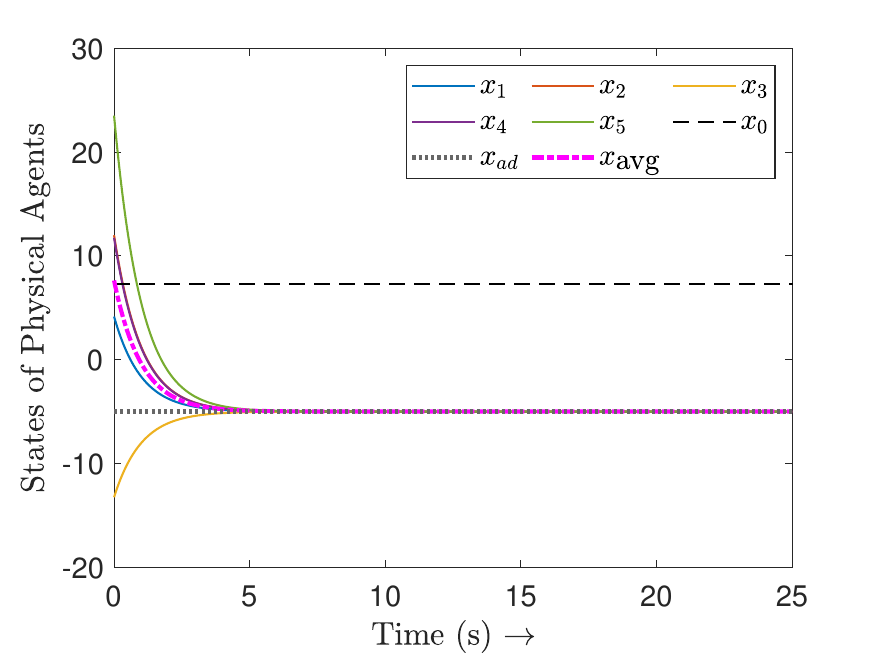}
	\caption{Physical states $x_i$ under attack in absence of auxiliary layer.}
	\label{fig:single_layer}
	\vspace{-15pt}
\end{figure}

\begin{figure*}[t!]
	\centering
	\subfigure[Physical states $x_i$]{\includegraphics[width=0.38\textwidth]{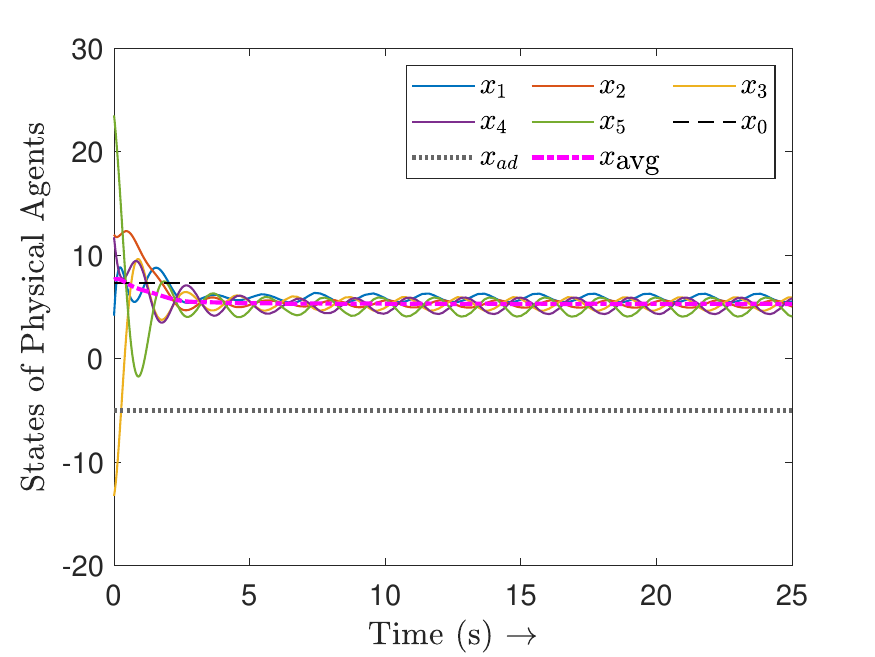}\label{fig:Sim_x1}} \hspace*{1pt}
	\subfigure[Auxiliary states $z_i$]{\includegraphics[width=0.38\textwidth]{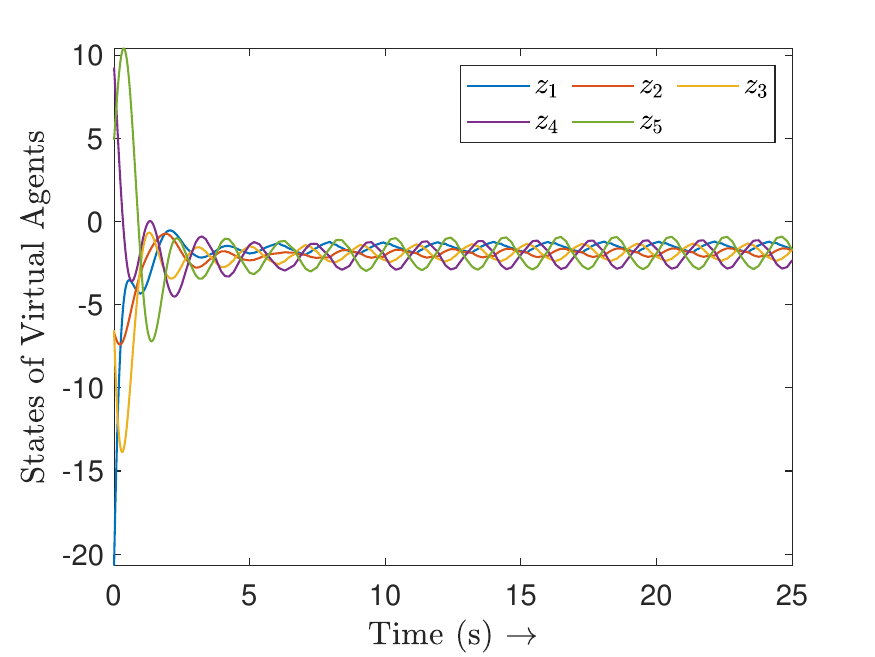}\label{fig:Sim_z1}}
	\subfigure[Events generated in both layers]{\includegraphics[width=0.8\textwidth]{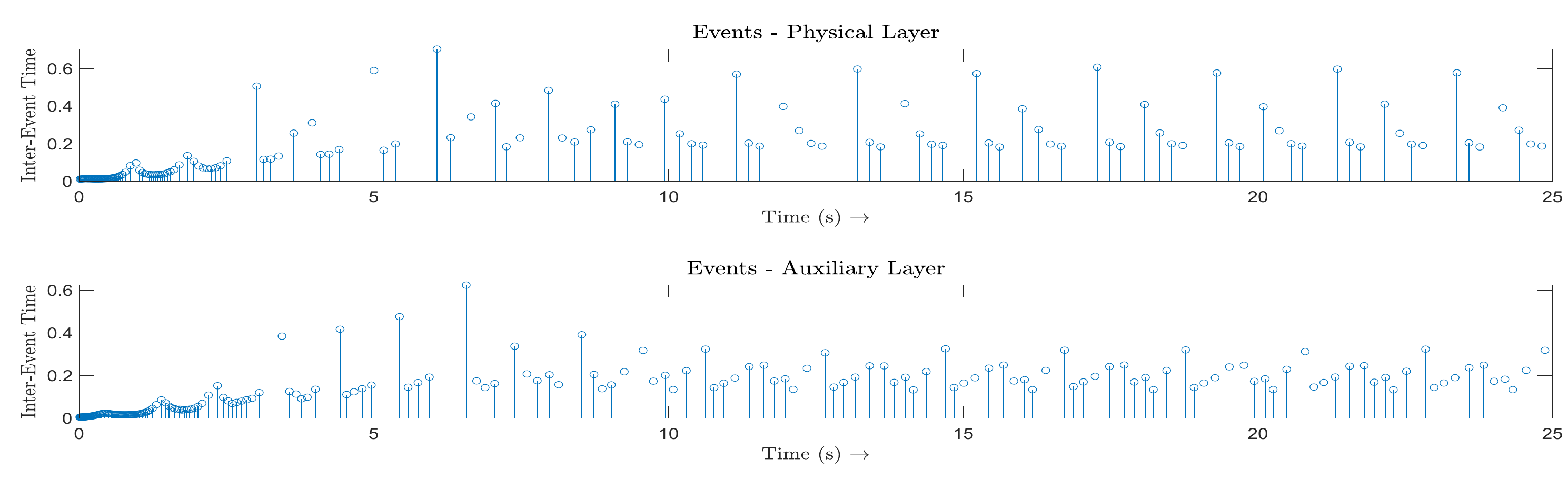}\label{fig:Sim_event1}}
	\caption{Physical and virtual agents' states, and the event sequence for system \eqref{err_system} with $\beta = 1$.}
	\label{fig:Sim_1}
	\vspace{-13pt}
\end{figure*}

\section{Simulations and Experiments}

\subsection{Simulation Example} \label{sec:Sim}
Consider a network of five agents in the physical layer subject to FDI attacks, as shown in Fig.~\ref{fig:attack_example_illust} where the attack is depicted for the link connecting the $i^\text{th}$ and $j^\text{th}$ agents. The leader's state is chosen to be an arbitrary constant value, $x_0 = 7.33861$. The cumulative FDI attack on the $i^{\text{th}}$ agent is given by $d_i=\sum_{j \in \mathcal{N}_i} d_{ij} = -\sum_{j \in \mathcal{N}_i} (x_j - x_i) + \sum_{j \in \mathcal{N}_i} ({x_{ad} - x_i - b_i (x_0 - x_i)})/{|\mathcal{N}_i|} = -\sum_{j \in \mathcal{N}_i} (x_j - x_i) + x_{ad} - x_i - b_i (x_0 - x_i), \forall i$, where $x_{ad} = -5$ is a (finite) offset decided by the attacker. In matrix form the attack vector is given as $d = -Ax + \pmb{1}_N x_{ad} - x - B x_0$, which satisfies Assumption~\ref{assumption}. Under this attack, \eqref{sys_singlelayer} becomes $\dot{x}_i = \sum_{j \in \mathcal{N}_i} (x_j - x_i) + b_{i} (x_0 - x_i) -\sum_{j \in \mathcal{N}_i} (x_j - x_i) + x_{ad} - x_i - b_i (x_0 - x_i) = x_{ad} - x_i$. The overall system can be expressed in the form \eqref{system_sld}, where the associated Laplacian $\mathcal{L}_x$ and $B$ are given by:
$$\mathcal{L}_x = \begin{bmatrix}
	3 & -1 & -1 &  0 & -1\\
	-1 &  2 & -1 &  0 &  0\\
	-1 & -1 &  3 & -1 &  0\\
	0 &  0 & -1 &  1 &  0\\
	-1 &  0 &  0 &  0 &  1
\end{bmatrix}, \quad B = \begin{bmatrix}
	1\\
	0\\
	0\\
	0\\
	0
\end{bmatrix}.$$
From the above we get:
$$A = -(\mathcal{L}_x + \text{diag}\{b_i\}) = \begin{bmatrix}
	-4 &  1 &  1 &  0 &  1\\
	1 & -2 &  1 &  0 &  0\\
	1 &  1 & -3 &  1 &  0\\
	0 &  0 &  1 & -1 &  0\\
	1 &  0 &  0 &  0 & -1
\end{bmatrix}.$$
\begin{itemize}[leftmargin=*]
\item A simulation example in the absence of the auxiliary layer is shown in Fig.~\ref{fig:single_layer}. It can be seen that all agents converge at the value $x_{ad} = -5$. Note that the controller does not have any knowledge about the attack signal $d$, it is assumed in this form only for illustration.

\item In view of Remark~\ref{rem_matrix_selection}, we further select $H=A$ and $K=A$. From \eqref{lyap_matrix}, we get $P_x = \pmb{I}_N$ and $P_z = \pmb{I}_N$, and from Lemma~\ref{matrix_cond} we have $G = P_z^{-1} K P_x = K$ and $D = G\pmb{1}_N = [-1, 0, 0, 0, 0]^\top$. Further, we calculate $Q_x = -(P_x A + A^\top P_x)$ and $Q_z = -(P_z H + H^\top P_z)$ as described in \eqref{lyap_deri0} to get $\lambda_{\min}(Q_x) = 0.48967$ and $\lambda_{\min} (Q_z) = 0.30900$. Other parameters used for simulation are $\sigma_1 = \sigma_2 = 1$, $\varepsilon = 186.39439$, and $\mu = 0.27775$. The constants $c_1$, $c_2$ and $c_3$ are selected randomly to satisfy the condition $(c_3 + c_1 \beta) \in (0,\lambda_{\text{min}} (Q_x))$ and $c_2 \beta \in (0,\lambda_{\text{min}}(Q_z))$, as defined for \eqref{lyap_deriF}. Simulations are carried out using Euler's method with the step size of $dt = 0.00001 s$.

\item Simulation results for the resilient system \eqref{err_system} comprising an auxiliary layer, with events defined by \eqref{event_separate} are depicted in Fig.~\ref{fig:Sim_1} and Fig.~\ref{fig:Sim_25} for different values of inter-layer gain $\beta$. In these figures, part (a) shows the states $x_i$ in the physical layer, part (b) shows the states $z_i$ in the auxiliary layer, and part (c) shows the instances of events with inter-event times for physical and auxiliary layers \eqref{event_separate}. It is observed that the higher value of $\beta$ results in increased event frequency and a large amount of oscillation in the response. To reduce these oscillations, we need to keep the value of $\beta$ relatively small. For comparison, smallest inter-event time, from simulation, for both layers is summarized in Table~\ref{MIET_tble} for different values of $\beta$. It can be observed that the smallest inter-event time decreases for both layers with an increase in $\beta$.

\begin{table}[t!]
	\centering
	\caption{Smallest inter-event time versus $\beta$}
	\label{MIET_tble}
	\begin{NiceTabular}{|l||c|c|c|}
		\hline
		\backslashbox{\textbf{Layer $\downarrow$}}{\textbf{$\beta \rightarrow$}} & \textbf{1} & \textbf{2.5} & \textbf{5}\\
		\hline
		\hline
		\textbf{Physical Layer} & 0.01243 & 0.00449 & 0.00201\\
		\hline
		\textbf{Auxiliary Layer} & 0.00566 & 0.00301 & 0.00155\\
		\hline
	\end{NiceTabular}
	\vspace{-15pt}
\end{table}

\item Since the states of the physical agents oscillate in the steady state (see part (a) of Fig.~\ref{fig:Sim_1} and Fig.~\ref{fig:Sim_25}), it is difficult to observe the effect of the varying inter-layer gain $\beta$. In this direction, we defined the mean state as $x_{\text{avg}}(t) = ({1}/{N}) \sum_i^N x_i(t), \forall t \geq 0$ shown by dash-dotted line in Fig.~\ref{fig:single_layer} and part (a) of Fig.~\ref{fig:Sim_1} and Fig.~\ref{fig:Sim_25}. The difference between the leader and the mean value of the agents' states (i.e., $x_0 - x_{\text{avg}}(t)$) for four different values of $\beta$ is shown in Fig.~\ref{fig:Error}, where $\beta=0$ represents the absence of the auxiliary layer. It can be seen in Fig.~\ref{fig:Error} that by increasing the value of inter-layer gain $\beta$, the value $x_0 - x_{\text{avg}}(t)$ becomes relatively smaller as $t \to \infty$.
\end{itemize}

\begin{figure*}[t!]
	\centering
	\subfigure[Physical states $x_i$]{\includegraphics[width=0.38\textwidth]{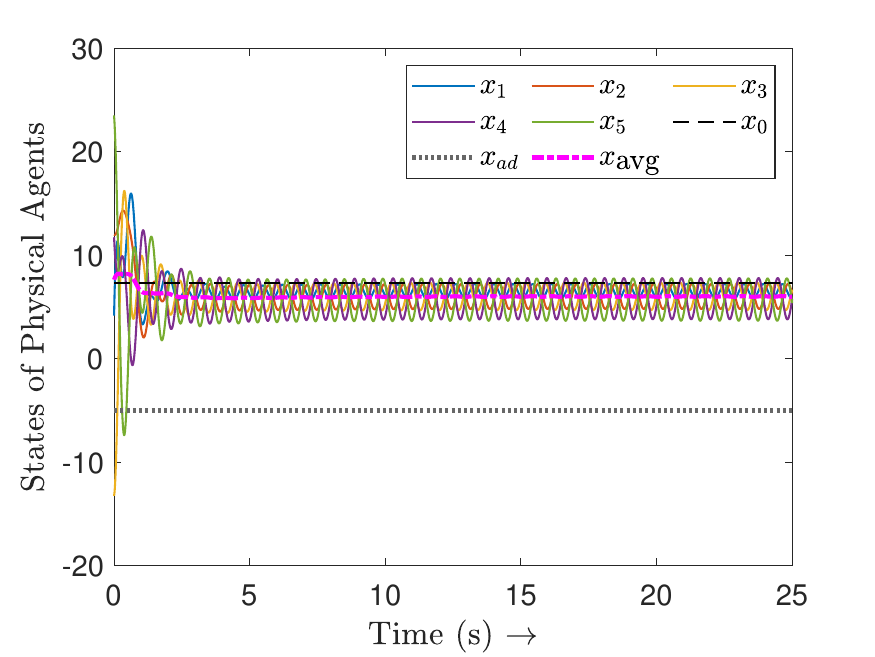}\label{fig:Sim_x25}} \hspace*{1pt}
	\subfigure[Auxiliary states $z_i$]{\includegraphics[width=0.38\textwidth]{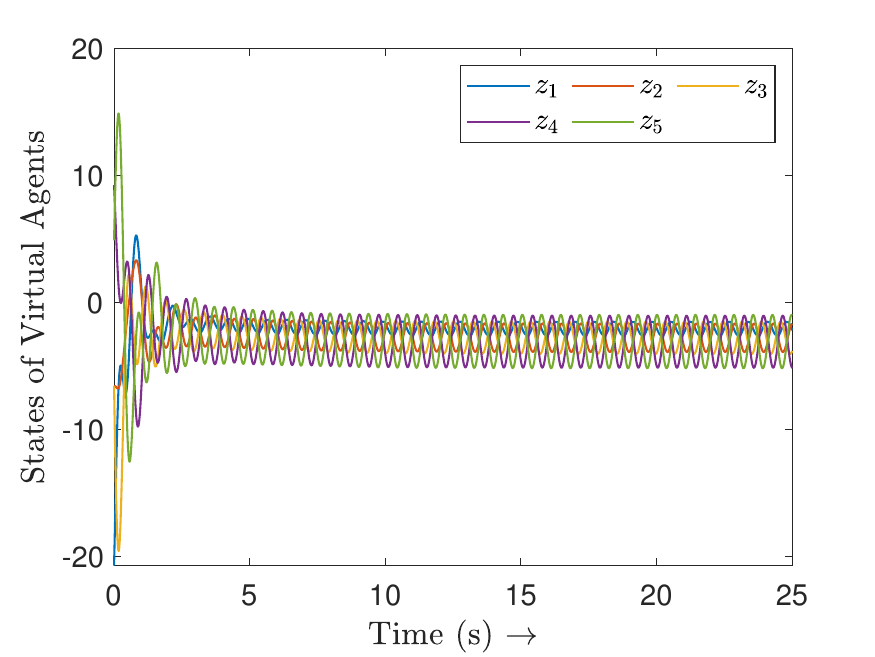}\label{fig:Sim_z25}}
	\subfigure[Events generated in both layers]{\includegraphics[width=0.8\textwidth]{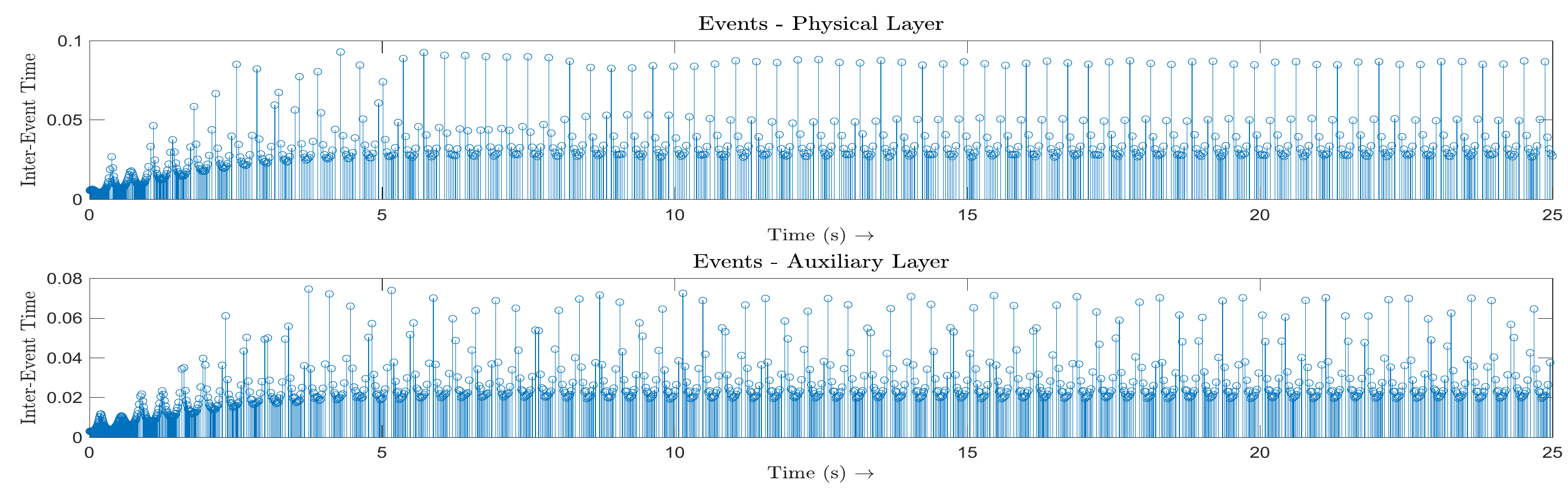}\label{fig:Sim_event25}}
	\caption{Physical and virtual agents' states, and the event sequence for system \eqref{err_system} with $\beta = 2.5$.}
	\label{fig:Sim_25}
	\vspace{-13pt}
\end{figure*}

\begin{figure}[t!]
	\centering
	\includegraphics[width=0.7\linewidth]{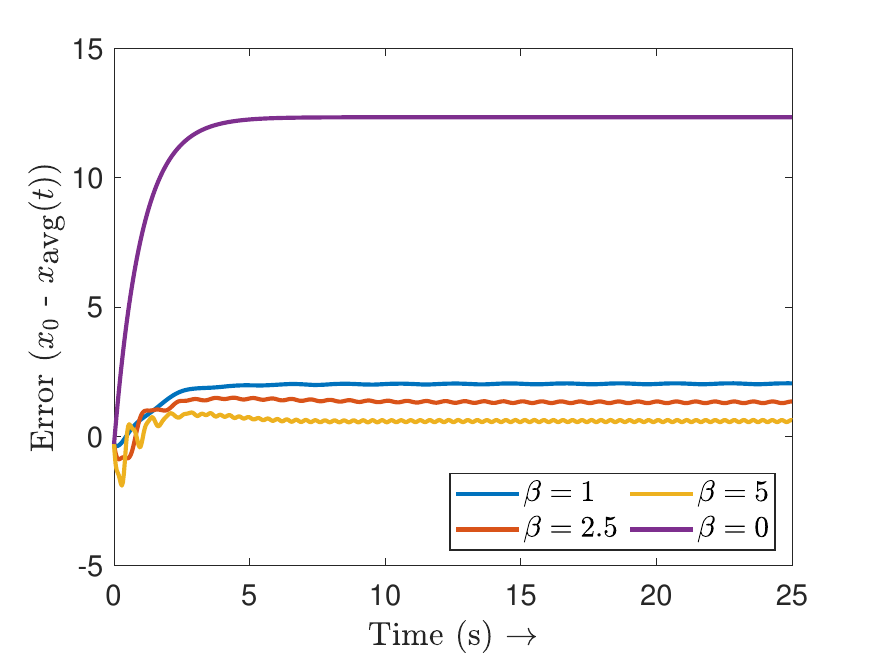}
	\caption{Error ($x_0 - x_{\text{avg}}$) between the leader and mean state for different values of $\beta$.}
	\label{fig:Error}
	\vspace{-15pt}
\end{figure}

\begin{remark}[Selection of inter-layer gain $\beta$]
	Simulation results indicate that large values of $\beta$ lead to state consensus closer to the leader's state (see Fig.~\ref{fig:Error}), however, with an increased number of events and a reduced MIET (see Fig.~\ref{fig:Sim_event1} and \ref{fig:Sim_event25}). This observation highlights a trade-off between achieving precise convergence near the leader's state and the associated communication burden, both influenced by $\beta$. Therefore, selecting appropriate $\beta$ is crucial to match the practical requirements as emphasized in the following subsection.	 
\end{remark}

\subsection{Experiments} \label{sec:Experiments}
For experiments, we use three \emph{Khepera IV}\footnote{Developed by K-Team Corporation, \url{https://www.k-team.com/khepera-iv}} differential drive robots operating within a motion capture (MoCap) system comprising overhead cameras (see Fig.~\ref{fig:HW}) and study the problem of consensus in their heading (velocity) directions. The differential drive model can be equivalently represented by the unicycle model as: $\dot{p}_i(t) = v_i(t)[\cos(\theta_i(t)), \sin(\theta_i(t))]^\top$ and $\dot{\theta}_i(t) = \omega_i(t)$, where $p_i(t) = [p_{x_i}(t), p_{y_i}(t)]^\top \in \mathbb{R}^2$ is the position, $\theta_i(t) \in [0, 2\pi)$ is the heading angle, and $v_i(t) \in \mathbb{R}$ and $\omega_i(t) \in \mathbb{R}$ are the linear and angular velocity controllers, respectively, for the $i^{\text{th}}$ robot, where $i = 1, 2, 3$. One of the simplest turn-rate controllers for the heading angle consensus is given by $\dot{\theta}_i(t) = \omega_i(t) = \sum_{j \in \mathcal{N}_i}(\theta_j(t) - \theta_i(t)) + b_i(\theta_0 - \theta_i), \forall i$ \cite{jin2017collision}, where $\theta_0$ is the leader heading angle. The preceding heading consensus model is equivalent to \eqref{system_sld} where we consider the attack signal $d(\theta, t)$ as described in the above simulation subsection~\ref{sec:Sim} with the same setting of the auxiliary layer parameters, and all-to-all communication topology among the three agents. In our experiments, we restrict the movement of the robots on the respective circles of (fixed) radii $r_i$ by controlling their linear speed $v_i(t)$ such that the relation $v_i(t) = r_i\omega_i(t)$ holds for all $t \geq 0$ for each $i$. In this situation, it can be observed that the problem of heading angle ($\theta_i$) consensus is equivalent to the problem of consensus in robots' angular positions ($x_i$) on their respective circles, where it holds that $\theta_i = x_i + ({\pi}/{2})$ for all $i$. With this setting, the attack signal $d$ affects both the angular velocity $\omega_i$, and hence, $v_i$, $\forall i$. Below, we describe the experiment setup and the procedure in a step-wise manner:

\begin{figure}[t!]
	\centering
	\subfigure[Khepera IV robot]{\includegraphics[width=0.15\textwidth]{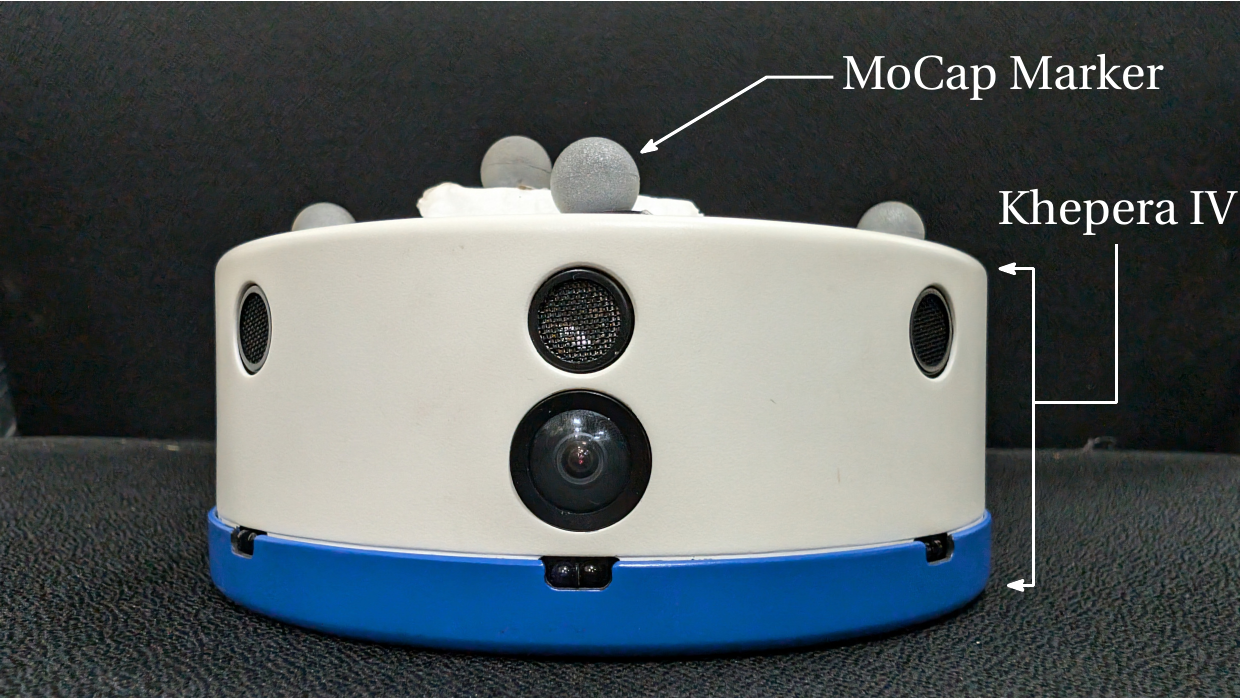}\label{fig:Khepera}} \hspace{5pt}
	\subfigure[MoCap cameras]{\includegraphics[width=0.15\textwidth]{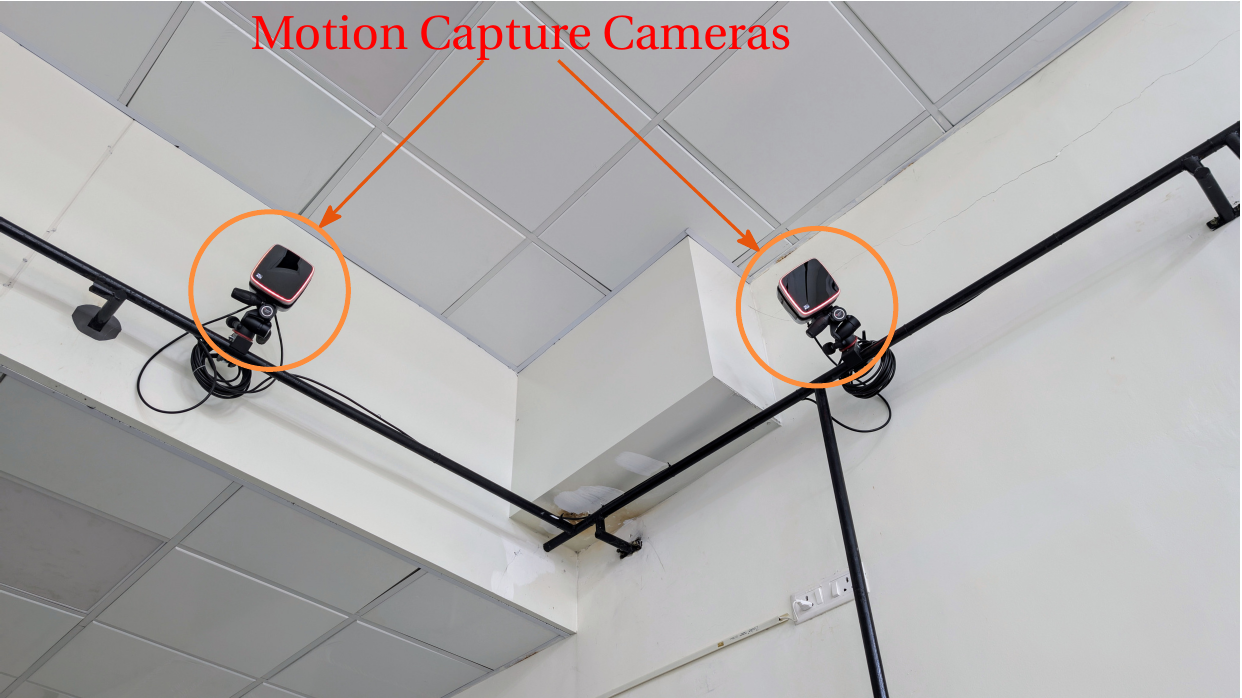}\label{fig:MoCap}}
	\caption{Khepera IV robot and MoCap setup.}
	\label{fig:HW}
	\vspace{-10pt}
\end{figure}

\begin{figure}[t!]
	\centering{
	\includegraphics[width=0.9\linewidth]{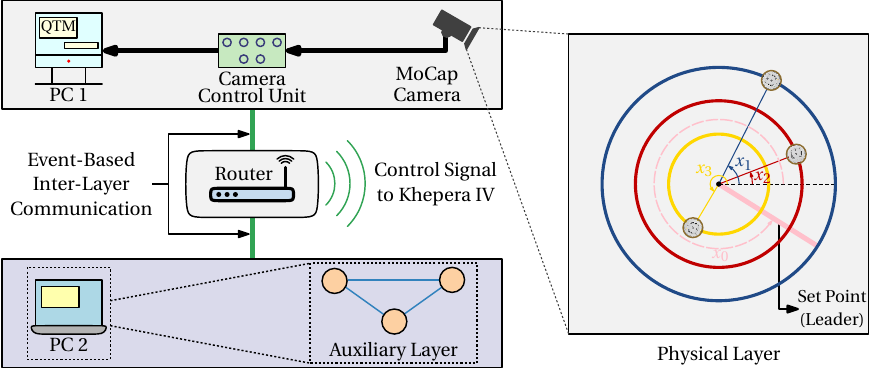}
	\caption{Schematic diagram for practical implementation.}
	\label{fig:HW_diag}}
	\vspace{-13pt}
\end{figure}

\begin{itemize}[leftmargin=*]
	\item The schematic diagram of the experimental setup is shown in Fig.~\ref{fig:HW_diag}, where the MoCap system collects the robot pose data (i.e., position and orientation) and shares it with the Qualisys Track Manager (QTM)\footnote{Developed by Qualisys, \url{https://www.qualisys.com/}}. The feedback from QTM is used to calculate the angular positions of the robots moving on their respective circles. The auxiliary layer dynamics are run on a separate computer using a Python program. The physical layer's data (PC1 and MoCap System) and the auxiliary layer's data (PC2) are shared through a router. The same router is used to transfer the control input to the Khepera IV robots over a wireless Wi-Fi network. The schematic shown here may look like a centralized control; however, it is implemented in a distributed fashion within the computer programs. 
	\item Before starting the experiment, the three Khepera IV robots are manually kept on the circles with tangential heading angles. The radii of the circles are taken as $r_1 = 0.5$ m (yellow), $r_2 = 0.9$ m (red), and $r_3 = 1.3$ m (blue). As shown in Fig.~\ref{fig:HW_diag}, the desired angular position set point (i.e., leader's state) is chosen as $x_0 = 2\pi - 0.5$ radians (or $331.3521^\circ$).
	\item To show that leader-follower consensus can be achieved with the auxiliary layer, an experimental result for system \eqref{err_system} in the absence of attack signal is shown in Fig.~\ref{fig:No_attack}. The camera captured result in Fig.~\ref{fig:No_attack_cam} shows the consensus of robots at the leader state $x_0$ (green dots on the pink line from the center). Due to the uncertainties in hardware implementation, it can be noticed in Fig.~\ref{fig:No_attack} that all robots do not exactly converge on the leader's state $x_0$. However, the error in their angular positions is small enough to be ignored for practical applications. Further, Fig.~\ref{fig:No_attack_QTM} shows the positions and headings of robots over time, obtained from the MoCap software - QTM.
	\item Fig.~\ref{fig:Expt_3} shows the experimental trajectory plot under the attacked scenario for $\beta=1.25$. The offset set by the attacker is $x_{ad} = 0.5$ radians (or $28.6479^\circ$). For demonstration purposes, the attack is simulated within the PC1, of course, with the controller being oblivious to the presence of the attack. The experiment starts from the initial configuration shown in Fig.~\ref{fig:expt_125_init}. Fig.~\ref{fig:expt_125_mid} shows the positions of the robots at an intermediate time $t= 29.7272 s$ when the robots are farthest from each other. Finally, the steady-state positions of the robots are shown in Fig.\ref{fig:expt_125_end}, where it can be observed that the robots stabilize \emph{near} the leader's angular position $x_0$. The events generated are recorded with the inter-event time and are plotted in Fig.~\ref{fig:Expt_event125}. 
\end{itemize}

\begin{figure}[t!]
	\centering
	\subfigure[From camera]{\includegraphics[width=0.3\linewidth]{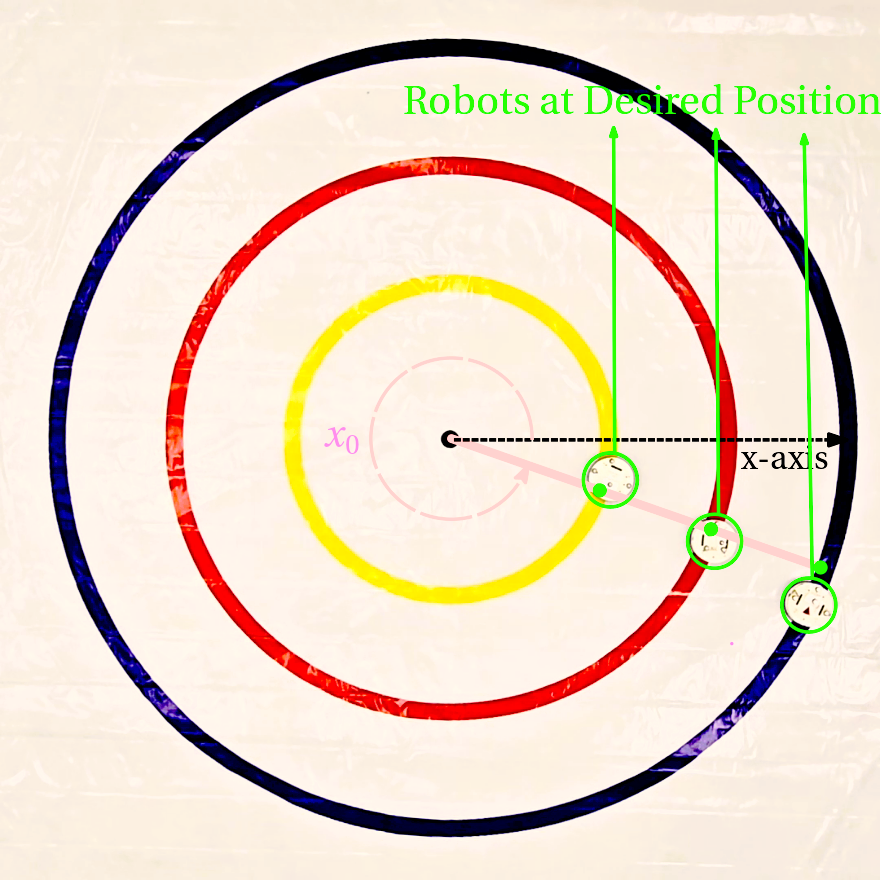}\label{fig:No_attack_cam}} \hspace{10pt}
	\subfigure[From QTM]{\includegraphics[width=0.3\linewidth]{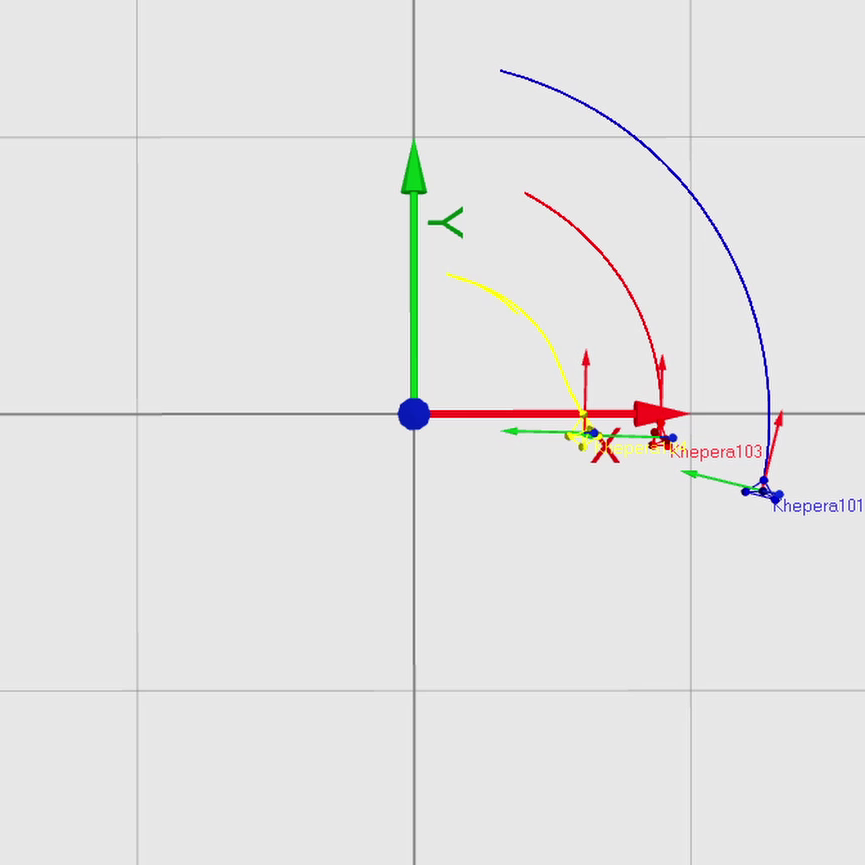}\label{fig:No_attack_QTM}}	
	\caption{Experimental results under no attack with $\beta = 1.0$.}
	\label{fig:No_attack}
	\vspace{-15pt}
\end{figure}

\begin{figure*}[t!]
	\centering{
		\subfigure[Initial heading]{\includegraphics[width=0.23\textwidth]{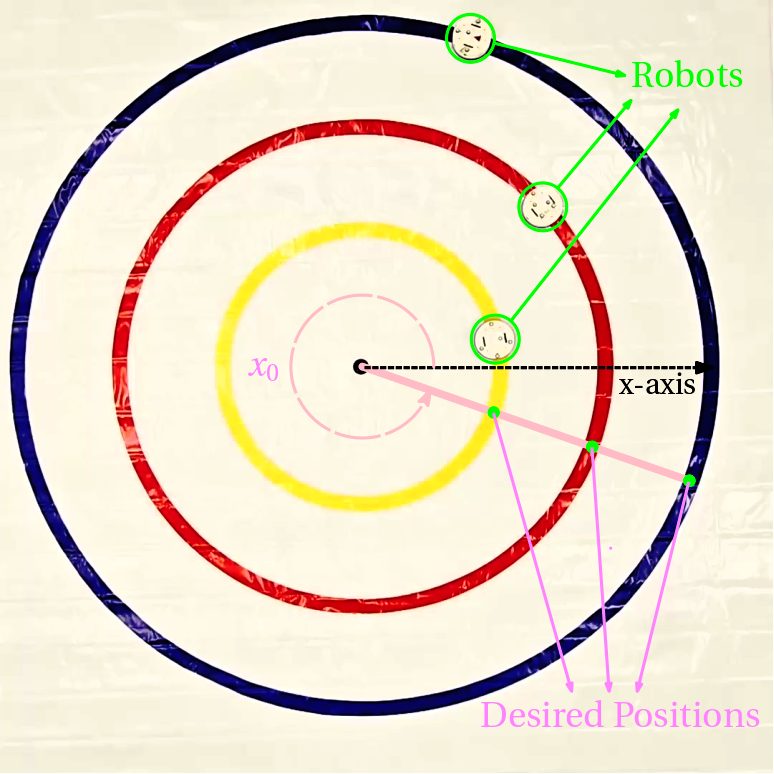}\label{fig:expt_125_init}} \hspace{10pt}
		\subfigure[Intermediate headings]{\includegraphics[width=0.23\textwidth]{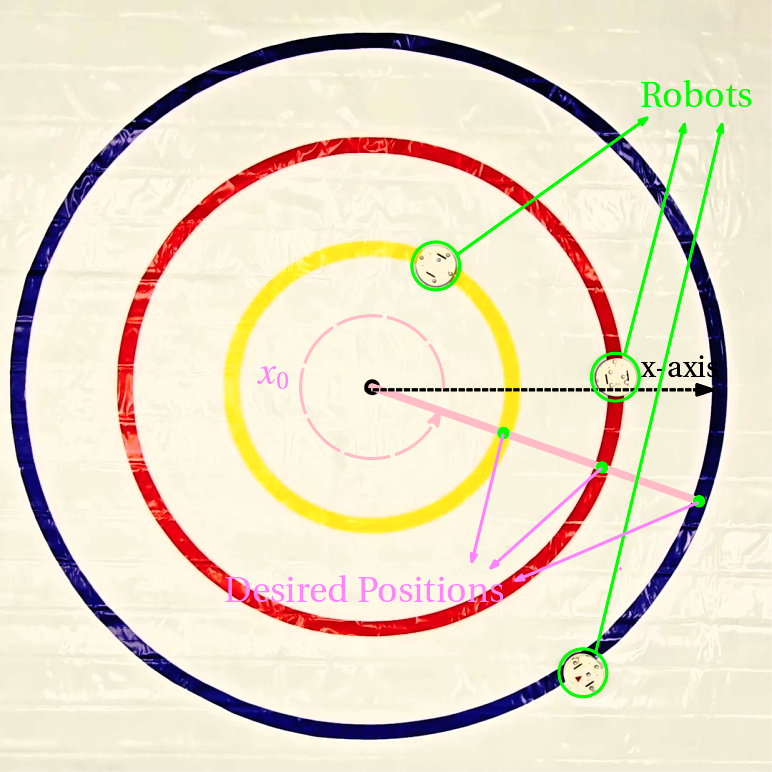}\label{fig:expt_125_mid}}\hspace{10pt}
		\subfigure[Final headings]{\includegraphics[width=0.23\textwidth]{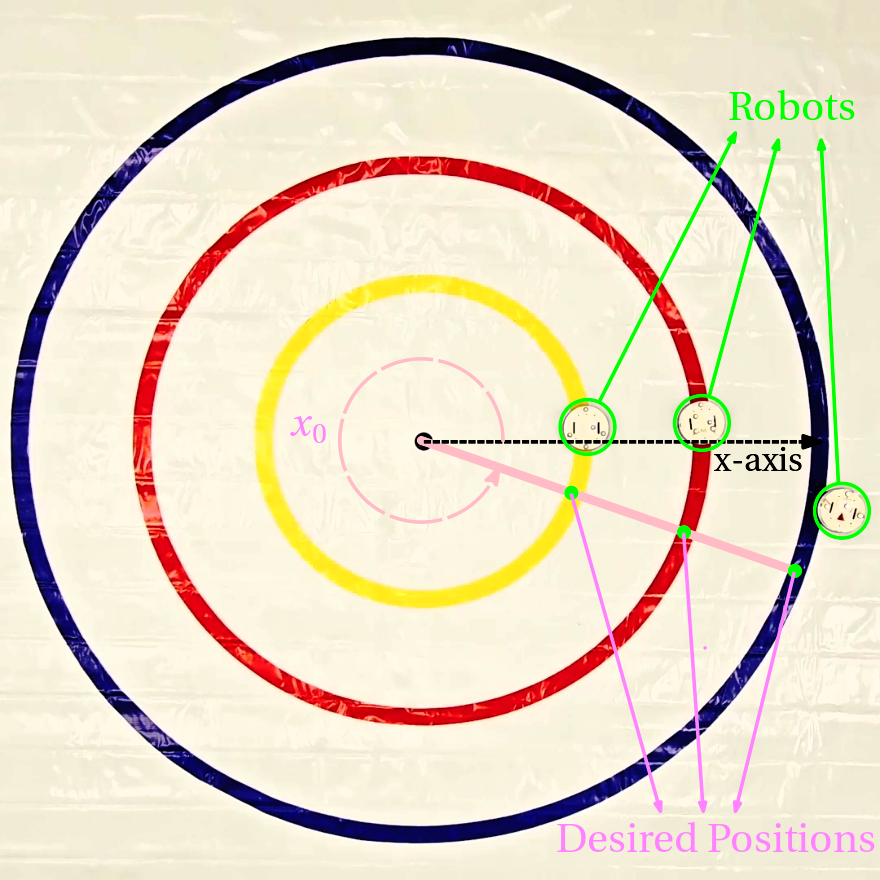}\label{fig:expt_125_end}} \hspace{10pt}
		\subfigure[Events generated in both layers]{\includegraphics[width=0.8\textwidth]{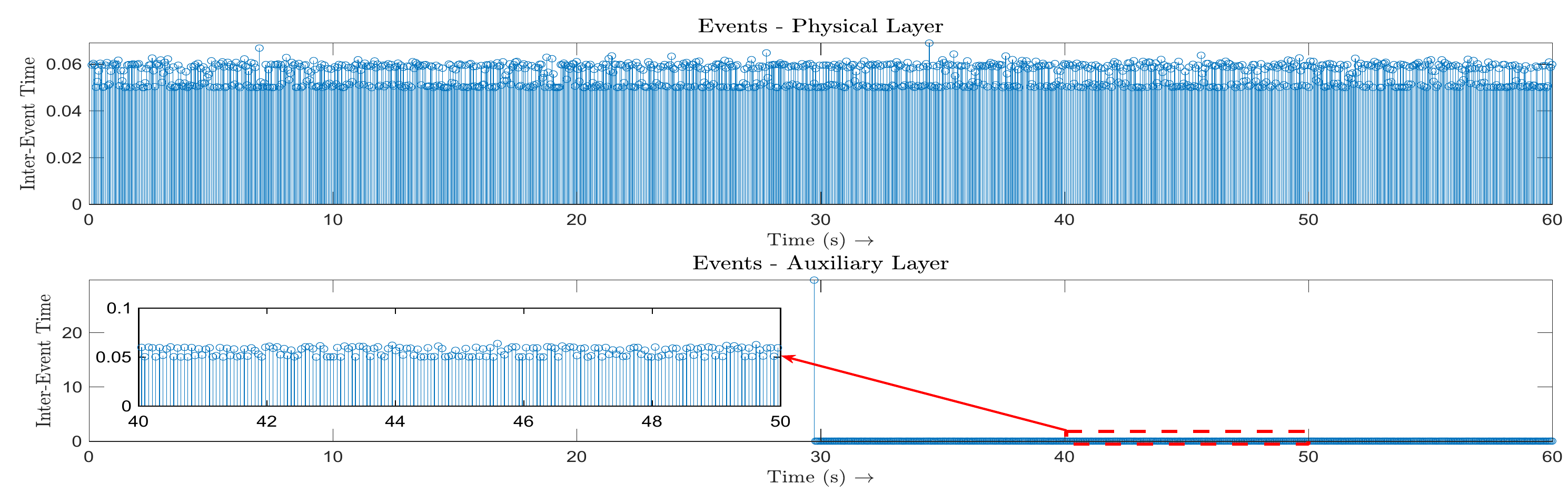}\label{fig:Expt_event125}}		
		\caption{Experimental results for heading angle consensus under attack with $\beta = 1.25$.}
		\label{fig:Expt_3}}
	\vspace{-15pt}
\end{figure*}

\section{Conclusions and Future Remarks} \label{sec:Conclusion}
Using an auxiliary layer, a leader-follower resilient state consensus problem was investigated in the presence of FDI attacks. To overcome the challenges posed by the increased communication burden between the layers, two event-triggering schemes for the communication between the agents in the auxiliary and physical layers were designed: (i) state-based event condition, where agents in both layers transmit the states at the same time $t_k$, and (ii) dynamic event condition, where agents in the physical layer transmit their states at time $t_{k_x}$ and agents in auxiliary layer transmit their state at time $t_{k_z}$. The absence of Zeno behavior was guaranteed by ensuring the presence of positive MIET in both cases, and the trade-off between the accurate state consensus and the number of events with inter-layer gain $\beta$ was discussed. The theoretical results obtained were verified with simulations, comparing different values of $\beta$. Finally, we discussed the implementation of an auxiliary layer scheme on hardware and carried out experiments on Khepera IV robots while restricting their motion on concentric circles. 

Besides, there are several challenges to the proposed approach that are worth considering in future research. Even though using an auxiliary layer to achieve resilience against communication attacks is easy to design and computationally moderate, the high oscillatory behavior of the system with event-triggered communication limits the practical applications to non-critical operations like equalized water irrigation and distributed optimization \cite{gusrialdi2023resilient}, where bound on consensus can be relatively large. Additionally, it is observed during experiments that larger values of inter-layer gain $\beta$ result in large values of $\dot{x}_i$, equivalently larger values of the Khepera IV robot's linear ($v_i$) and angular ($\omega_i$) velocities causing actuator saturation. Further, even though the presence of MIET is established, and thus the absence of Zeno behavior is guaranteed, it may happen that for large values of the inter-layer gain $\beta$, the system has practical Zeno behavior. Essentially, for large values of $\beta$, the MIET  might become smaller than the single cycle of the digital controller, thus resulting in inter-layer communication in every processor cycle.

\appendix

\section*{MIET for Single Event Condition} \label{MIET_single_event}
In this case, it is challenging to find an actual MIET due to the involvement of multiple sampling errors. However, it is possible to derive a non-conservative (large) bound on the MIET by considering the individual errors, as discussed below. At every event $t_k$, the sampling errors are reset, i.e., $e_x(t_k) = 0$ and $e_z(t_k)=0$. As per the event sequence \eqref{event2}, an event is triggered at MIET for the minimum value of the right side of inequality \eqref{event_cond_single_layer}, i.e.,
\begin{equation*}
	\frac{\beta}{c_1} \|P_x K\|^2 \|e_z\|^2 + \frac{\beta}{c_2} \|P_z G\|^2 \|e_x\|^2 = \kappa \left( \varepsilon - \frac{\|P_x\|^2}{c_3} \bar{D}^2 \right).
\end{equation*}
Analogous to above discussion before Theorem~\ref{thm_MIET}, we consider the agents' dynamics are bounded i.e., $\| \dot{x} \| \leq M_x$ and $\| \dot{z} \| \leq M_z$ for some positive constants $M_x$ and $M_z$. Hence, the sampling error $e_x = x - \bar{x}$, $e_z = z - \bar{z}$ and their derivative norms $\| \dot{e}_x\| = \| \dot{x} \| \leq M_x$ and $\| \dot{e}_z \| = \| \dot{z} \| \leq M_z$ are bounded. Furthermore, the rate of increase of the square of error norms are given by:
\begin{subequations}
	\begin{align} 
	\label{err_dynamics_x}\frac{d}{dt} \|e_x\|^2 &= \frac{d}{dt} e_x^\top e_x = 2 e_x^\top \dot{e_x} \leq 2\|e_x\| M_x\\
	\label{err_dynamics_z}
	\frac{d}{dt} \|e_z\|^2 &= \frac{d}{dt} e_z^\top e_z = 2 e_z^\top \dot{e_z} \leq 2\|e_x\| M_z.
	\end{align}
\end{subequations}
At the MIET ($\tau$), we have
\begin{multline} \label{event_MIET}
	\frac{\beta}{c_1} \|P_x K\|^2 \|e_z(t_k + \tau)\|^2 + \frac{\beta}{c_2} \|P_z G\|^2 \|e_x(t_k + \tau)\|^2 \\
	= \kappa \left( \varepsilon - \frac{\|P_x\|^2}{c_3} \bar{D}^2 \right).
\end{multline}
Considering a liberal bound on the MIET, we have the following conditions from \eqref{event_MIET}:
\begin{subequations} \label{err_tau}
	\begin{align}
		\frac{\beta}{c_2} \|P_z G\|^2 \|e_x(t_k + \tau)\|^2 & \leq \kappa \left( \varepsilon - \frac{\|P_x\|^2}{c_3} \bar{D}^2 \right) \label{err_tau_x}, \\
		\frac{\beta}{c_1} \|P_x K\|^2 \|e_z(t_k + \tau)\|^2 & \leq \kappa \left( \varepsilon - \frac{\|P_x\|^2}{c_3} \bar{D}^2 \right). \label{err_tau_z}
	\end{align}
\end{subequations}
Putting sampling error from \eqref{err_tau_x} and \eqref{err_tau_z} into \eqref{err_dynamics_x} and \eqref{err_dynamics_z}, respectively, we get:
\begin{subequations}
	\begin{align} 
\label{err_dynamics_x2} \frac{d}{dt} \|e_x\|^2 & \leq 2 M_x \sqrt{\frac{c_2 \kappa (\varepsilon - \frac{\|P_x\|^2 \bar{D}^2}{c_3})}{\beta\|P_z G\|^2}},\\
\label{err_dynamics_z2} \frac{d}{dt} \|e_z\|^2 & \leq 2 M_z \sqrt{\frac{c_1 \kappa (\varepsilon - \frac{\|P_x\|^2 \bar{D}^2}{c_3})}{\beta\|P_x K\|^2}}.
	\end{align}
\end{subequations}
Adding \eqref{err_dynamics_x2} and \eqref{err_dynamics_z2}, we get
\begin{equation}\label{intermediate_bound}
\frac{d}{dt} \|e_x\|^2 + \frac{d}{dt} \|e_z\|^2	\leq \rho,
\end{equation}
where $\rho {=} 2\left(M_x \sqrt{\frac{c_2 \kappa \left(\varepsilon {-} \frac{\|P_x\|^2 \bar{D}^2}{c_3}\right)}{\beta\|P_z G\|^2}} {+} M_z \sqrt{\frac{c_1 \kappa \left(\varepsilon {-} \frac{\|P_x\|^2 \bar{D}^2}{c_3}\right)}{\beta\|P_x K\|^2}}\right)$. Taking integral of \eqref{intermediate_bound} from $[t_k,t_k + \tau)$, yields
\begin{equation} \label{tau_s}
\|e_x(t_k + \tau)\|^2 + \|e_z(t_k +\tau)\|^2 \leq \rho \tau.
\end{equation}
Putting \eqref{err_tau_x} and \eqref{err_tau_z} into \eqref{tau_s}, implies
\begin{equation*}
	\frac{c_2 \kappa \left(\varepsilon - \frac{\|P_x\|^2 \bar{D}^2}{c_3}\right)}{\beta\|P_z G\|^2} + \frac{c_1 \kappa \left(\varepsilon - \frac{\|P_x\|^2 \bar{D}^2}{c_3}\right)}{\beta\|P_x K\|^2} \leq \rho \tau,
\end{equation*}
Consequently, the MIET is given by:
\begin{equation*} 
	\tau \geq \frac{1}{\rho} \left[\frac{c_2 \kappa \left(\varepsilon - \frac{\|P_x\|^2 \bar{D}^2}{c_3}\right)}{\beta\|P_z G\|^2} + \frac{c_1 \kappa \left(\varepsilon - \frac{\|P_x\|^2 \bar{D}^2}{c_3}\right)}{\beta\|P_x K\|^2}\right] > 0.
\end{equation*}

\bibliographystyle{IEEEtran}
\bibliography{References}

\end{document}